\documentclass[superscriptaddress,reprint,showpacs,aps,pra]{revtex4-1}
\usepackage{graphics}
\usepackage{bm}
\usepackage{bbm}
\usepackage{amsmath}
\usepackage{amssymb}
\usepackage{graphicx}
\usepackage{amsthm}
\usepackage{comment}

\newtheorem{theorem}{Theorem}

\newtheorem{corol}{Corollary}

\newtheorem{prop}{Proposition}

\begin{document}

\title{One-body entanglement as a quantum resource in fermionic systems}
\author{N.\ Gigena}
\affiliation{IFLP/CONICET and Departamento  de F\'{\i}sica,
    Universidad Nacional de La Plata, C.C. 67, La Plata (1900), Argentina}
\author{M.\ Di Tullio}
\affiliation{IFLP/CONICET and Departamento  de F\'{\i}sica,
    Universidad Nacional de La Plata, C.C. 67, La Plata (1900), Argentina}
\author{R.\ Rossignoli}
\affiliation{IFLP/CONICET and Departamento  de F\'{\i}sica,
    Universidad Nacional de La Plata, C.C. 67, La Plata (1900), Argentina}
\affiliation{Comisi\'on de Investigaciones Cient\'{\i}ficas (CIC), La Plata (1900), Argentina}

\begin{abstract}
We show that one-body entanglement, which is a measure of the deviation of a pure fermionic state from a Slater determinant (SD) and is determined by the mixedness of the single-particle density matrix (SPDM), can be considered as a quantum resource. The associated theory has SDs and their convex hull as free states, and number conserving fermion linear optics operations (FLO), which include one-body unitary transformations and measurements of the occupancy of single-particle modes, as the basic free operations. We first provide a bipartitelike formulation of one-body entanglement, based on a Schmidt-like decomposition of a pure $N$-fermion state, from which the SPDM [together with the $(N-1)$-body density matrix] can be derived. It is then proved that under FLO operations, the initial and postmeasurement SPDMs always satisfy a majorization relation, which ensures that these operations cannot increase, on average, the one-body entanglement. It is finally shown that this resource is consistent with a model of fermionic quantum computation which requires correlations beyond antisymmetrization. More general free measurements and the relation with mode entanglement are also discussed.
\end{abstract}
\maketitle

\section{Introduction}

Quantum entanglement and identical particles are two fundamental concepts in quantum mechanics. Entanglement in systems of distinguishable components is particularly valuable in the field of quantum information theory \cite{NC.00} because it can be considered as a resource within the Local Operations and Classical Communication (LOCC) paradigm \cite{NC.00,PV.07}. Extending the notion of entanglement to the realm of indistinguishable particles  is, however, not  straightforward because the constituents of the system cannot be individually accessed. Different approaches have been considered, like mode entanglement \cite{Za.02,Shi.03,FL.13}, where subsystems correspond to a set of single-particle (SP) states in a given basis, extensions based on correlations between observables \cite{BK.04, ZL.04,SI.11, BG.13,BF.14} and entanglement beyond symmetrization \cite{SC.01,SL.01,ES.02, GM.02, PY.01, WV.03,IV.13,OK.13,SL.141, GR.15,MB.16}, which is independent of the choice of SP basis. Several studies on the relation between these types of entanglement  \cite{FL.13,WV.03,GR.15,DD.16,GR.17,BF.17,DGR.18,DRGC.19,SD.19,DS.20} and on whether exchange correlations can be associated with entanglement  \cite{CM.07,KC.14,CC.18,FC.18,MY.19} have been recently made. There is also a growing interest in quantum chemistry simulations based on optical lattices \cite{DDL.03,AC.19}, which would benefit from a detailed characterization of fermionic correlations. In this paper we will focus on entanglement beyond antisymmetrization in fermionic systems and analyze its consideration as a quantum resource. 

Quantum resource theories \cite{CG.19,PRXM.18} have recently become  a topic of great interest since they essentially describe quantum information processing under a restricted set of operations. Standard entanglement theory in systems of  distinguishable components is just one of these theories, amongst which we may include others like  quantum thermodynamics \cite{GH.16, GM.15}, coherence  \cite{BCP.14,SA.17}, nonlocality \cite{BC.14} 
and non-Gaussianity \cite{WP.12}. 

In the usual entanglement theory a multipartite quantum system shared by distant parties is considered. These parties can operate each on their own subsystem and  are allowed to communicate via classical channels \cite{PV.07}. From these restrictions the LOCC set arises naturally as the set of free operations of the resource theory, and  the set of free (separable) states is then derived. In our case ignoring antisymmetrization correlations defines Slater determinants (SDs) and their convex hull as the set  of ``free'' states $\cal S$ and we are looking for a set of free operations $\cal O$ consistent with this set. 

With this aim, we first define a partial order relation on the Fock space $\cal F$ of the system, based  on the mixedness of the corresponding {\it single-particle density matrix} (SPDM) $\rho^{(1)}$ [also denoted as the one-particle or one-body density matrix (DM)], which determines whether a given pure fermionic state can be considered more entangled than another state. A bipartite like formulation for this {\it one-body entanglement}, involving $\rho^{(1)}$ and the $(N-1)$-body density matrix (isospectral for pure states of $N$ fermions) is also provided.  Next we define a class of operations consistent with $\cal S$ {\it and} the previous partial order, through a majorization relation to be fulfilled by the initial and final SPDMs, which ensures that one-body entanglement will not be increased by such operations. We then show that number conserving Fermion linear optics (FLO) operations \cite{TD.02, Br.05, TB.19},  which  include one-body unitary transformations and  measurement of the  occupancy of a SP state, are indeed within this class. One-body entanglement then plays  the role of a resource in a theory where $\cal S$ is the convex hull of  SDs and $\cal O$ is that of FLO operations. Possible extensions of the set of free operations and connection of this resource with a quantum computation model and with mode entanglement are also discussed. 
 
\section{Formalism}
 
\subsection{One-body entanglement} 

We consider a SP space ${\cal H}$ of finite  dimension $n$ and a set of fermion creation and annihilation operators $c^\dag_k$ and $c_k$   associated with  an   orthogonal basis  of ${\cal H}$, satisfying the anticommutation relations $\{c_k,c_{k'}^\dag\}=\delta_{kk'}$, $\{c_k,c_{k'}\}=\{c^\dag_k,c^\dag_{k'}\}=0$. The elements of  the SPDM $\rho^{(1)}$ in a general  fermionic state $\rho$ are given by 
\begin{equation}
\rho^{(1)}_{kk'}=\langle c^\dag_{k'} c_k\rangle={\rm Tr}\,\rho\, c^\dag_{k'} c_k.
\end{equation} 
They form a Hermitian matrix with eigenvalues $\lambda_\nu=\langle c^\dag_\nu c_\nu\rangle \in[0,1]$, where $c^\dag_\nu=\sum_k U_{k\nu}c^\dag_k$ creates a fermion in one of the ``natural'' SP orbitals diagonalizing $\rho^{(1)}$ ($\langle c^\dag_{\nu'}c_{\nu}\rangle=\lambda_\nu\delta_{\nu\nu'}$). For pure states $\rho=|\Psi\rangle\langle\Psi|$, the ``mixedness'' of  $\rho^{(1)}$ then reflects the deviation of $|\Psi\rangle$ from a SD $[\prod_\nu (c^\dag_\nu)^{n_\nu}]|0\rangle$,  since for the latter $\lambda_\nu=n_\nu=0$ or $1$ $\forall\, \nu$ and hence $(\rho^{(1)})^2=\rho^{(1)}$. 

Such mixedness can be rigorously characterized through  majorization \cite{Ni.01,NV.01,MOA.11,Bh.97}. For  states $|\Psi\rangle$ and $|\Phi\rangle$ with the same fermion number $N={\rm Tr}\,\rho^{(1)}_{\Psi}={\rm Tr}\,\rho^{(1)}_{\Phi}$, we will say that $|\Psi\rangle$ is not less {\it one-body entangled} than $|\Phi\rangle$ if $\rho^{(1)}_{\Psi}$ is {\it more (or equally) mixed} than $\rho^{(1)}_{\Phi}$,  i.e.\ if their eigenvalues $\bm\lambda=(\lambda_1,\ldots,\lambda_n)$, sorted in decreasing order,  satisfy the majorization relation 
\begin{equation}
\bm{\lambda}(\rho^{(1)}_{\Psi})\prec\bm{\lambda}(\rho^{(1)}_{\Phi})\,,\label{2}
\end{equation}
which means 
\begin{equation}
\sum_{\nu=1}^m\lambda_\nu(\rho_\Psi^{(1)})\leq \sum_{\nu=1}^m\lambda_\nu(\rho^{(1)}_\Phi)
\end{equation}
for $m=1,\ldots,n-1$, with identity for $m=n$. Thus SDs are the least one-body entangled states, as their SPDM  majorizes any other $\rho^{(1)}$ with the same trace. Relation \eqref{2} is analogous to that imposed by LOCC operations on reduced states of systems of distinguishable components, which in the bipartite case lead to the celebrated Nielsen's theorem: $|\Psi_{AB}\rangle$ can be converted by LOCC to $|\Phi_{AB}\rangle$ (and hence is not less entangled than $|\Phi_{AB}\rangle$) if and only if their reduced states satisfy $\bm{\lambda}(\rho^{A(B)}_{\Psi})\prec\bm{\lambda}(\rho^{A(B)}_{\Phi})$ \cite{N.99,NV.01}. Local measurements reduce the ignorance about the state of the measured subsystem, decreasing the mixedness of reduced states and hence bipartite entanglement. Similarly, we will show that one-body entanglement  will decrease under operations which reduce the ignorance about the SPDM.

\subsubsection{The associated Schmidt decomposition}

We first  remark that one-body entanglement also admits  a bipartite like formulation: A pure state $|\Psi\rangle$ of $N$ fermions ($\sum_k c^\dag_k c_k|\Psi\rangle=N|\Psi\rangle$) can be expanded as 
\begin{eqnarray}
|\Psi\rangle&=&\frac{1}{N}\sum_{k,l}\Lambda_{kl}c^\dag_kC^\dag_l|0\rangle
\label{SD1}
\end{eqnarray}
where $C^\dag_l=c^\dag_{l_1}\ldots c^\dag_{l_{N-1}}$, $l=1,\ldots,(^{\;\;\;n}_{N-1})$,  are operators creating $N-1$ fermions in specific SP states labeled by $l$, satisfying $\langle 0|C_l C^\dag_{l'}|0\rangle=\delta_{ll'}$, while the coefficients $\Lambda_{kl}$ form  an  $n\times (^{\;\;\;n}_{N-1})$ matrix $\Lambda$  satisfying ${\rm Tr}\,\Lambda\Lambda^\dag=N$. Thus, each term in the sum \eqref{SD1} is a SD which is repeated $N$ times, such that  
\begin{equation} 
c_k|\Psi\rangle= \sum_{l}\Lambda_{kl}C^\dag_l|0\rangle
\label{ckp}
\end{equation} 
is the (unnormalized) state of remaining fermions when SP state  $k$ is occupied, while 
\begin{equation} 
C_l|\Psi\rangle=(-1)^{N-1}\sum_k \Lambda_{kl}c^\dag_k|0\rangle\label{clp}
\end{equation} 
is that of remaining fermion when the $N-1$ SP states $l$ are occupied. In this way, $\langle \Psi|\Psi\rangle=\frac{1}{N}{\rm Tr}\,\Lambda \Lambda^\dag=1$. Moreover, Eqs.\ \eqref{ckp}--\eqref{clp} allow us to  express the elements of both the SPDM $\rho^{(1)}$ {\it and the $(N-1)$-body DM $\rho^{(N-1)}$} in terms of $\Lambda$ as 
\begin{eqnarray}
\rho^{(1)}_{kk'}&=&\langle \Psi|c^\dag_{k'} c_k|\Psi\rangle=(\Lambda\Lambda^\dag)_{kk'}\,,\label{rh1}\\
\rho^{(N-1)}_{ll'}&=&\langle\Psi| C^\dag_{l'} C_l|\Psi\rangle=(\Lambda^T\Lambda^*)_{ll'}\label{rhn1}\,.
\end{eqnarray}
Eqs.\ \eqref{rh1}--\eqref{rhn1} are analogous to those for the reduced states $\rho^{A(B)}$ of distinguishable subsystems in a standard pure  bipartite state $|\Psi_{AB}\rangle=\sum_{i,j}C_{ij}|i_A,j_B\rangle$, where $\rho^{A}_{ii'}=\langle |i'_A\rangle\langle i_A|\rangle=(CC^\dag)_{ii'}$, $\rho^{B}_{jj'}=\langle |j'_B\rangle\langle j_B|\rangle=(C^{_T}C^*)_{jj'}$ \cite{NC.00}. The only difference is that ${\rm Tr}\,\rho_{A(B)}={\rm Tr}\,CC^\dag=1$ whereas ${\rm Tr}\,\rho^{(1)}={\rm Tr}\,\rho^{(N-1)}=N$. 

Eqs.\ \eqref{rh1}--\eqref{rhn1} imply that $\rho^{(1)}$ {\it and $\rho^{(N-1)}$} have {\it the same nonzero eigenvalues $\lambda_\nu$}, which are just the square of the singular values of $\Lambda$. Moreover, by means of the singular value decomposition $\Lambda=UDV^\dag$, with $D_{\nu\nu'}=\sqrt{\lambda_\nu}\delta_{\nu\nu'}$ and $U$ and $V$ unitary matrices (of $n\times n$ and $(^{\;\;\;n}_{N-1})\times (^{\;\;\;n}_{N-1})$ respectively), we may now obtain from \eqref{SD1} the $1$--$(N-1)$ Schmidt-like decomposition  of the $N$-fermion state: 
\begin{eqnarray}
|\Psi\rangle&=&\frac{1}{N}\sum_\nu\sqrt{\lambda_\nu} c^\dag_\nu C^\dag_\nu|0\rangle\,,   \label{SD}
\end{eqnarray}
where 
\begin{equation}
c^\dag_\nu=\sum_{k}U_{k\nu}c^\dag_k\;\;\;\;, \;\;\;\;\;C^\dag_\nu=\sum_l V_{l\nu}^*C^\dag_l
\end{equation}
are the ``natural'' one- and $N-1$- fermion creation operators satisfying  
\begin{eqnarray}
\langle 0|c_\nu  c^\dag_{\nu'}|0\rangle&=&\delta_{\nu\nu'}=\langle 0|C_\nu C^\dag_{\nu'}|0\rangle\,.\\
\langle \Psi|c^\dag_\nu  c_{\nu'}|\Psi\rangle&=&\lambda_\nu\delta_{\nu\nu'}=     \langle \Psi|C^\dag_\nu C_{\nu'}|\Psi\rangle\,.
\end{eqnarray}
Thus,
\begin{equation}
c_\nu|\Psi\rangle=\sqrt{\lambda_\nu}C^\dag_\nu|0\rangle\,, \;\;\;C_\nu|\Psi\rangle=(-1)^{N-1}\sqrt{\lambda_\nu}c^\dag_\nu|0\rangle\,, 
\end{equation}
i.e.\ the orthogonal natural $N-1$-fermion states $C^\dag_\nu|0\rangle$ are  those of remaining fermions when the natural SP orbital $\nu$ is occupied, while $c^\dag_\nu|0\rangle$ are the orthogonal states of the remaining fermion when the natural $N-1$-fermion state $C^\dag_{\nu}|0\rangle$ (which in general is no longer a SD) is occupied. Therefore, in an $N$-fermion state one-body entanglement is actually the $1$--$(N-1)$  -body entanglement, associated with the correlations between one- and $N-1$-body observables. 

In the case of a SD $|\Psi\rangle=(\prod_{\nu=1}^N  c^\dag_\nu)|0\rangle$, $\lambda_\nu=1$ ($0$) for $\nu\leq N$ $(>N)$, with $C^\dag_\nu\propto\prod_{\nu'\neq \nu}^N c^\dag_{\nu'}$ such that $c^\dag_\nu C^\dag_\nu|0\rangle=|\Psi\rangle$ for $\nu\leq N$. 
On the other hand, for $N=2$  Eq.\ \eqref{SD} becomes the Slater decomposition of a two-fermion state \cite{SC.01,SL.01,ES.02}, 
\begin{equation} 
|\Psi\rangle=\sum_{\nu}\sqrt{\lambda_\nu}c^\dag_\nu c^\dag_{\bar{\nu}}|0\rangle=\frac{1}{2}\sum_\nu\sqrt{\lambda_\nu}(c_\nu C^\dag_\nu+c_{\bar{\nu}}C^\dag_{\bar{\nu}})|0\rangle\label{sfn2}
\end{equation}
where $C^\dag_\nu=c^\dag_{\bar{\nu}}$, $C^\dag_{\bar{\nu}}=-c^\dag_\nu$. In this case one-body entanglement is directly related to that between the set of normal $\nu$ and $\bar{\nu}$ modes, which contain each just one-fermion (see sec.\ \ref{B4}). 

\subsubsection{One-body entanglement entropies}

We may now define a general one-body entanglement entropy $E(|\Psi\rangle)\equiv E^{(1)}(|\Psi\rangle)$ as
\begin{equation}
E(|\Psi\rangle)=S(\rho^{(1)}_{\Psi})=S(\rho^{(N-1)}_{\Psi})\,,\label{E}
\end{equation} 
where $S(\rho^{(1)})$ is a Schur-concave  function  \cite{MOA.11,Bh.97} of $\rho^{(1)}$. These entropies will all satisfy 
\begin{equation}
E(|\Psi\rangle)\geq E(|\Phi\rangle)\label{3}\,,
\end{equation} 
whenever the majorization relation of Eq.\ \eqref{2} is fulfilled. For instance, trace-form entropies 
\begin{equation} 
S(\rho^{(1)})={\rm Tr}f(\rho^{(1)})=\sum\limits_\nu f(\lambda_\nu)\label{seq}\end{equation} 
where $f:[0,1]\rightarrow \mathbb{R}$ is concave and satisfies $f(0)=f(1)=0$ \cite{CR.02}, will fulfill \eqref{3}, with $E(|\Psi\rangle)\geq 0$ $\forall$ $|\Psi\rangle$  and $E(|\Psi\rangle)=0$  if and only if $|\Psi\rangle$ is  a SD.  Such  $E(|\Psi\rangle)$ will then be one-body entanglement monotones. Examples are the von Neumann entropy of $\rho^{(1)}$, 
$S(\rho^{(1)})=-\sum_\nu\lambda_\nu \log_2\lambda_\nu$, a quantity of interest in various fields \cite{ER.96, GJ.97, ZG.97, ZS.99, BS.15},  and the one-body entropy \cite{GR.15,DGR.18} 
\begin{equation}
S_1(\rho^{(1)})=-\sum_\nu \lambda_\nu \log_2\lambda_\nu+(1-\lambda_\nu)\log_2(1-\lambda_\nu),\label{4}
\end{equation}
which represents, for $|\Psi\rangle$ of  definite fermion number $N$, the minimum relative entropy (in the grand canonical ensemble) between $\rho=|\Psi\rangle\langle\Psi|$ and any fermionic Gaussian state $\rho_g$: $\!S_1(\rho^{(1)}_{\Psi})=\mathop{\rm Min}_{\rho_g}S(\rho||\rho_g)$ \cite{DGR.18}, for  $S(\rho||\rho')=-{\rm Tr}\rho(\log_2\rho'-\log_2\rho)$ and $\rho_g\propto\exp[-\sum_{k,k'}\alpha_{kk'}c^\dag_k c_{k'}]$ (pair creation and annihilation terms in $\rho_g$ are not required for such $|\Psi\rangle$ \cite{DGR.18}). It is also the minimum over all SP bases of the sum of all single mode entropies \cite{GR.15} $-p_k\log_2 p_k-(1-p_k)\log_2 (1-p_k)$, where $p_k=\langle c^\dag_k c_k\rangle$. We remark that the SPDM and hence any measure \eqref{E}  are in principle experimentally accessible. Measurement of the fermionic SPDM in optical lattices has been recently reported \cite{AHE.18}. 

All measures \eqref{E} can be extended to mixed states 
\begin{equation}
\rho=\sum_\alpha p_\alpha|\Psi_\alpha\rangle\langle\Psi_\alpha|
\end{equation}
of definite $N$ through their convex roof extension $E(\rho)={\rm Min}\sum_\alpha p_\alpha E(|\Psi_\alpha\rangle)$, where the minimum is over all representations $\{p_\alpha\geq 0,\,|\Psi_\alpha\rangle\}$ of $\rho$ \cite{GR.15}. Such $E(\rho)$ represents a one-body entanglement of formation, vanishing if and only if $\rho$ is a convex mixture of SDs.

\subsection{One-body entanglement nongenerating operations}

\subsubsection{Definition and basic properties}

We now define a class of operations which do not generate one-body entanglement, i.e., which do not increase, on average, the mixedness of the SPDM.  
 
{\bf Definition 1.} {\it Let $\varepsilon(\rho)=\sum_j {\cal K}_j\rho{\cal K}_j^\dagger$ be a quantum operation on a fermion state $\rho$, with $\{{\cal K}_j,\,\sum_j {\cal K}_j^\dag{\cal K}_j=\mathbbm{1}\}$ a set of Kraus operators, assumed number conserving. Let $\rho^{(1)}$ and  $\rho^{(1)}_j$ be the SPDMs determined by $\rho$ and $\rho_j={\cal K}_j\rho {\cal K}^\dagger_j/p_j$, with $p_j={\rm Tr}[\rho {\cal K}^\dagger_j {\cal K}_j]$. We say that $\varepsilon$ is one-body entanglement nongenerating (ONG) if it admits a set of Kraus operators $\{{\cal K}_j\}$ satisfying $\forall\,\rho$ the  relation
\begin{equation}
\bm\lambda(\rho^{(1)})\prec \sum_j p_j\,  \bm\lambda(\rho_j^{(1)})\,,\label{avmaj}
\end{equation}
where eigenvalues $\bm\lambda(\rho^{(1)}_j)$ are sorted in decreasing order.} 

This majorization relation is analogous to that satisfied by reduced local states under local operations in the standard entanglement theory [48] and implies  
\begin{equation}
S(\rho^{(1)})\geq S\left[\sum_j p_j\bm{\lambda}(\rho^{(1)}_j)\right]\geq\sum_j p_j S(\rho^{(1)}_{j})\,,\label{sineq}
\end{equation}
for any concave entropy $S(\rho^{(1)})$, such as those of Eq.\ \eqref{seq}. For pure states $\rho=|\Psi\rangle\langle\Psi|$, $\rho_j=|\Phi_j\rangle\langle\Phi_j|$ is also pure $\forall j$,  with  $|\Phi_j\rangle\propto {\cal K}_j|\Psi\rangle$, and Eqs.\ \eqref{E}, \eqref{sineq} imply 
\begin{equation}
E(|\Psi\rangle)\geq \sum_j p_j E(|\Phi_j\rangle)\geq E(\varepsilon(|\Psi\rangle\langle\Psi|)) \,,\label{entrop}
\end{equation}
showing that any one-body  entanglement monotone  \eqref{E} will not increase, on average, after ONG operations. In particular, if $|\Psi\rangle$ is a SD, $E(|\Psi\rangle)=0$ and Eq.\ \eqref{entrop} implies  that all states $|\Phi_j\rangle\propto {\cal K}_j|\Psi\rangle$ must  be SDs or zero, i.e.\ {\it all  Kraus operators fulfilling \eqref{avmaj} should map free states onto free states}. And for the one-body entanglement of formation of general mixed states $\rho$, Eq.\  \eqref{entrop} implies 
\begin{equation} 
E(\rho)\geq E(\varepsilon(\rho))
\end{equation}
since by using  the minimizing representation, $E(\rho)=\sum_\alpha p_\alpha E(|\Psi_\alpha\rangle)\geq \sum_{\alpha,j}p_\alpha p_{\alpha j}E(|\Phi_{\alpha j}\rangle)\geq E(\varepsilon(\rho))$.   

It also follows from \eqref{avmaj}  that the set of ONG operations is convex and {\it closed under composition}, i.e., $\bm{\lambda}(\rho^{(1)})\prec \sum_{i,j} p_{ij}\bm{\lambda}(\rho^{(1)}_{ij})$ for ${\cal K}_{ij}={\cal K}^b_i{\cal K}^a_j$ and  $\varepsilon(\rho)=\varepsilon^b[\varepsilon^a(\rho)]$. This property ensures that ONG operations can be applied any number of times in any order. 
\begin{prop}
The conversion of a pure state $|\Psi\rangle\in{\cal F}$ into another pure state $|\Phi\rangle\in{\cal F}$ by means of ONG operations is possible only   if the majorization relation \eqref{2} is satisfied by the corresponding SPDMs. 
\end{prop}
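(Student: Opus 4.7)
The plan is to reduce the claim to the average majorization property \eqref{avmaj} by exploiting the fact that the output is pure. Suppose an ONG operation $\varepsilon$ converts $|\Psi\rangle$ into $|\Phi\rangle$, i.e.\ $\varepsilon(|\Psi\rangle\langle\Psi|)=|\Phi\rangle\langle\Phi|$, and fix a Kraus set $\{\mathcal{K}_j\}$ witnessing the ONG condition. Writing $|\Phi_j\rangle\propto \mathcal{K}_j|\Psi\rangle$ and $p_j=\|\mathcal{K}_j|\Psi\rangle\|^2$, we have the mixture
\begin{equation}
|\Phi\rangle\langle\Phi|=\sum_j p_j|\Phi_j\rangle\langle\Phi_j|.
\end{equation}

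The first key step is to observe that a pure state admits only the trivial convex decomposition: since the right-hand side is rank one, every $|\Phi_j\rangle$ with $p_j>0$ must coincide with $|\Phi\rangle$ up to a global phase. Consequently the post-measurement SPDMs collapse to a single matrix, $\rho_j^{(1)}=\rho_\Phi^{(1)}$ whenever $p_j>0$, and therefore the sorted eigenvalue averages simplify to
\begin{equation}
\sum_j p_j\,\bm\lambda(\rho_j^{(1)})=\bm\lambda(\rho_\Phi^{(1)}).
\end{equation}

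The second step is to plug this identity into the defining condition \eqref{avmaj} of the ONG class, which yields $\bm\lambda(\rho^{(1)}_\Psi)\prec\bm\lambda(\rho^{(1)}_\Phi)$, i.e.\ exactly the majorization relation \eqref{2}. Note that trace preservation of $\varepsilon$ on number-conserving states guarantees $\mathrm{Tr}\,\rho_\Psi^{(1)}=\mathrm{Tr}\,\rho_\Phi^{(1)}=N$, so the majorization is between vectors of equal sum, as required.

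There is essentially no obstacle here: the argument is a one-line consequence of the ONG definition once one uses the rigidity of pure-state decompositions. The only minor care needed is that the ONG property is stated as the \emph{existence} of a Kraus set satisfying \eqref{avmaj}; since we are free to work with that particular decomposition throughout the argument, this causes no trouble. One could also remark, as a sanity check, that the conclusion is consistent with the monotonicity inequality \eqref{entrop}: every Schur-concave entropy of $\rho^{(1)}_\Psi$ must dominate the corresponding entropy of $\rho^{(1)}_\Phi$, and Hardy--Littlewood--P\'olya's theorem recovers \eqref{2} from the family of such inequalities, providing an alternative route to the same conclusion.
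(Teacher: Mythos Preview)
Your argument is correct and is essentially the same as the paper's: both use that the output is pure to force $\rho_j^{(1)}=\rho_\Phi^{(1)}$ for every $j$ with $p_j>0$, and then read off \eqref{2} directly from \eqref{avmaj}. The only nuance the paper adds is an explicit appeal to closedness under composition so that a \emph{sequence} of ONG operations can be replaced by a single one before running the argument; you implicitly assume this when you start from ``an ONG operation $\varepsilon$''.
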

\begin{proof}
The state conversion will consist in some sequence of ONG operations, which can be resumed in just one ONG operation due to the closedness under composition. Let $\{{\cal K}_j\}$ be a set of associated Kraus operators satisfying \eqref{avmaj}. After this operation is performed, we should have   ${\cal K}_j|\Psi\rangle=\sqrt{p_j}|\Phi\rangle \,\forall j$,  with $p_j=\langle\Psi|{\cal K}^\dagger_j {\cal K}_j|\Psi\rangle$, implying  $\rho^{(1)}_j =\rho^{(1)}_\Phi\,\forall j$ and hence Eq.\ \eqref{2} when (\ref{avmaj}) is fulfilled.
\end{proof}
Then, maximally one-body entangled states are those pure states whose SPDM is majorized by that of any other state. Due to Eq.\ \eqref{3} they will also maximize $E(|\Psi\rangle)$ for any choice of $S$. At fixed fermion number $N\geq 2$ and $n=mN$ they are states leading to   \begin{equation}
\rho^{(1)}=\mathbbm{1}_n/m\,,
\end{equation}
for which any SP basis is natural. For $m$ integer such $\rho^{(1)}$ emerges, for instance, from Greenberger-Horne-Zeilinger (GHZ)-like states involving  superpositions of SDs in orthogonal subspaces:
\begin{equation}
|\Psi\rangle=\frac{1}{\sqrt{m}}\sum_{l=0}^{m-1} c^\dag_{Nl+1}\ldots c^\dag_{Nl+N}|0\rangle=\frac{1}{N\sqrt{m}}\sum_{\nu=1}^nc^\dag_\nu C^\dag_\nu|0\rangle\,,\label{FGZ}
\end{equation}  
which lead to $\langle c^\dag_\nu c_{\nu'}\rangle=\delta_{\nu\nu'}/m$. 

\subsubsection{Fermion linear optics operations as ONG}

We now show that number conserving FLO operations \cite{TD.02, Br.05, TB.19},  which include one-body unitary transformations and measurement of the occupancy of a SP mode, are  included in the ONG set. First, any number conserving one-body unitary transformation 
\begin{equation} 
{\cal U}=\exp[-i\sum_{k,k'}H_{k'k}c^\dag_{k'}c_{k}]\,,
\end{equation} 
with ${\cal U}^\dag{\cal U}=\mathbbm{1}$ ($H^\dag=H$) is obviously ONG: Since ${\cal U}c^\dag_k{\cal U}^\dag=\sum_{k'} U_{k'k}c^\dag_{k'}$, with $U=e^{-iH}$, it will map  the SPDM as $\rho^{(1)}\rightarrow U^\dag\rho^{(1)}U$, leaving its eigenvalues unchanged (and hence transforming SDs into SDs). It  can  be implemented through composition of phaseshifting and beamsplitters  unitaries \cite{TD.02, Br.05, TB.19} ${\cal U}_{p}(\phi)=e^{-i\varphi\, c^\dagger_kc_k}$, ${\cal U}_{b}(\theta)=e^{-i\theta\,(c^\dagger_k c_{k'} + c^\dagger_{k'} c_k)}$, which are the basic unitary elements of the FLO set. 
    
FLO operations also include  measurements of the occupancy of single-particle modes, described by projectors
\begin{equation}
{\cal P}_k=c^\dagger_k c_k\,,\; {\cal P}_{\bar k}=c_k c^\dagger_k\,, \label{Pk}
\end{equation}
which satisfy ${\cal P}_k+{\cal P}_{\bar{k}}=\mathbbm{1}$. We now show explicitly the following fundamental result.  

\begin{theorem}\label{t1}
The measurement of the occupancy of a single-particle state $|k\rangle=c^\dagger_k|0\rangle\in{\cal H}$, described by the operators \eqref{Pk}, is a ONG operation. 
\end{theorem}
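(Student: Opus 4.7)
The plan is to translate the measurement into a convenient mode-$k$ decomposition in the Fock space, obtaining explicit block forms for the three SPDMs involved, and then to verify the majorization~\eqref{avmaj} via the Ky Fan characterization of partial sums of eigenvalues, supplemented by a fermionic sharpening of Cauchy--Schwarz. Concretely, I first write $|\Psi\rangle = c^\dagger_k|\phi\rangle + |\psi\rangle$, the unique decomposition with $|\phi\rangle$ an $(N{-}1)$- and $|\psi\rangle$ an $N$-fermion state, both on modes $\neq k$. Since $c_k|\phi\rangle = c_k|\psi\rangle = 0$, one immediately gets ${\cal P}_k|\Psi\rangle = c^\dagger_k|\phi\rangle$, ${\cal P}_{\bar k}|\Psi\rangle = |\psi\rangle$, $p_k = \|\phi\|^2$ and $p_{\bar k} = \|\psi\|^2$. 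A short computation using the anticommutation relations then shows that in the basis $\{|k\rangle\}\oplus|k\rangle^\perp$,
\begin{equation*}
\rho^{(1)} = \begin{pmatrix} p_k & u^\dagger \\ u & p_k B_k + p_{\bar k} B_{\bar k}\end{pmatrix},\quad \rho^{(1)}_k = \begin{pmatrix} 1 & 0 \\ 0 & B_k\end{pmatrix},\quad \rho^{(1)}_{\bar k} = \begin{pmatrix} 0 & 0 \\ 0 & B_{\bar k}\end{pmatrix},
\end{equation*}
where $u_l = \langle\phi|c_l|\psi\rangle$ and $B_j$ denotes the SPDM of $|\Phi_j\rangle$ restricted to the modes $\neq k$.

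Since the sorted eigenvalue vectors of $\rho^{(1)}_k$ and $\rho^{(1)}_{\bar k}$ take the forms $(1,\bm\alpha)$ and $(\bm\gamma,0)$, with $\bm\alpha,\bm\gamma$ the sorted spectra of $B_k, B_{\bar k}$, relation~\eqref{avmaj} is equivalent to $\sum_{i=1}^m\lambda_i(\rho^{(1)}) \le p_k\sum_{i=1}^m\lambda_i(\rho^{(1)}_k) + p_{\bar k}\sum_{i=1}^m\lambda_i(\rho^{(1)}_{\bar k})$ for every $m\le n$. The case $m=n$ is an equality by particle-number conservation, so I focus on $m<n$ and invoke Ky Fan to write $\sum_{i=1}^m\lambda_i(\rho^{(1)}) = \max_V{\rm Tr}(P_V\rho^{(1)})$ over dim-$m$ subspaces $V\subset{\cal H}$. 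It then suffices to produce, for each such $V$, dim-$m$ subspaces $V_k, V_{\bar k}$ satisfying ${\rm Tr}(P_V\rho^{(1)}) \le p_k{\rm Tr}(P_{V_k}\rho^{(1)}_k) + p_{\bar k}{\rm Tr}(P_{V_{\bar k}}\rho^{(1)}_{\bar k})$, because by Ky Fan applied to $\rho^{(1)}_k$ and $\rho^{(1)}_{\bar k}$ the right-hand side is then bounded above by $p_k\sum_{i=1}^m\lambda_i(\rho^{(1)}_k) + p_{\bar k}\sum_{i=1}^m\lambda_i(\rho^{(1)}_{\bar k})$. I would parameterize $V = {\rm span}(v_1,V_0)$ with $v_1 = \sqrt{s}|k\rangle + \sqrt{1-s}\,w$ for $s\in[0,1]$, unit $w\in|k\rangle^\perp$, and $V_0\subset|k\rangle^\perp$ of dimension $m{-}1$ perpendicular to $w$, and then set $V_k = {\rm span}(|k\rangle,V_0)$, $V_{\bar k} = {\rm span}(w,V_0)$. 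A direct expansion then yields
\begin{equation*}
{\rm RHS}-{\rm LHS} = (1-s)p_k(1-w^\dagger B_k w) + s\,p_{\bar k}\,w^\dagger B_{\bar k}w - 2\sqrt{s(1-s)}\,{\rm Re}(w^\dagger u),
\end{equation*}
which by AM--GM on the two nonnegative terms becomes nonnegative as soon as $|w^\dagger u|^2 \le p_kp_{\bar k}(1-w^\dagger B_k w)(w^\dagger B_{\bar k}w)$.

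I expect the main obstacle to be the proof of this last inequality, since a naive Cauchy--Schwarz on $w^\dagger u = \langle\phi|c_w|\psi\rangle$ (with $c_w = \sum_{l\neq k}w_l^*c_l$) only yields the weaker bounds $|w^\dagger u|^2 \le p_kp_{\bar k}(1-w^\dagger B_k w)$ or $|w^\dagger u|^2 \le p_kp_{\bar k}w^\dagger B_{\bar k}w$. The fermionic sharpening I would use exploits the fact that, since $(c^\dagger_w)^2 = 0$, the state $c^\dagger_w|\phi\rangle$ automatically lies in the mode-$w$-occupied subspace, so $\langle c^\dagger_w\phi|\psi\rangle = \langle c^\dagger_w\phi|\Pi_w\psi\rangle$ where $\Pi_w = c^\dagger_w c_w$ projects onto that subspace. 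Applying Cauchy--Schwarz in this reduced form and using $\|c^\dagger_w|\phi\rangle\|^2 = p_k(1 - w^\dagger B_k w)$ (from $\{c_w,c^\dagger_w\}=1$) together with $\|\Pi_w|\psi\rangle\|^2 = p_{\bar k}w^\dagger B_{\bar k}w$ gives exactly the required bound, thereby establishing~\eqref{avmaj} and completing the proof.
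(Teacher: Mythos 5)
Your proof is correct, and it reaches the same essential core as the paper's argument by a differently organized route. Both proofs ultimately rest on the same sharpened Cauchy--Schwarz bound: your inequality $|w^\dagger u|^2\le p_kp_{\bar k}(1-w^\dagger B_kw)(w^\dagger B_{\bar k}w)$ is, mode for mode, the paper's bound $|\langle c^\dagger_{k'}c_k\rangle|^2\le p_{k\bar k'}\,p_{\bar kk'}$ on joint occupation probabilities (with $k'=w$), which the paper derives by splitting $|\Psi\rangle$ into the four $(k,k')$ occupation sectors and you derive via nilpotency of $c^\dagger_w$ and the projector $\Pi_w$. The surrounding machinery differs, however. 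The paper fixes ${\cal S}_m$ to be the span of the top $m$ eigenvectors of $\rho^{(1)}$, distinguishes the cases where ${\cal S}_m$ contains, is orthogonal to, or partially overlaps $|k\rangle$, and in the last case augments to an $(m{+}1)$-dimensional subspace on which it combines exact trace conservation with a lower bound on the smallest eigenvalue of the restriction. You instead run Ky Fan variationally over \emph{arbitrary} $m$-dimensional $V$, exhibit explicit comparison subspaces $V_k,V_{\bar k}$ of the same dimension, and close the estimate with AM--GM; this eliminates the case analysis and the smallest-eigenvalue lemma, and the explicit block forms of $\rho^{(1)}$, $\rho^{(1)}_k$, $\rho^{(1)}_{\bar k}$ make the cancellation of the $V_0$ contributions transparent. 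What the paper's version buys in exchange is the intermediate relation (their Eq.~\eqref{majors2}) for restrictions $\rho^{(1)}_{\cal S}$ to subspaces compatible with $|k\rangle$, which they then use to extend the theorem to mixed states by purification. Your argument as written covers only pure states; to match the full scope of the theorem (Definition 1 demands \eqref{avmaj} for all $\rho$) you would either need to note that your key inequality survives convex mixtures --- writing $\rho=\sum_\alpha q_\alpha|\Psi_\alpha\rangle\langle\Psi_\alpha|$, bounding each $\langle\phi_\alpha|c_w|\psi_\alpha\rangle$ as you do and applying one further Cauchy--Schwarz over $\alpha$ --- or invoke a purification argument as the paper does. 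This is a routine addition, but it should be stated.
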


\begin{proof}
Consider a general pure fermionic state $|\Psi\rangle$ with SPDM $\rho^{(1)}$. Let $\rho_k^{(1)}$ and $\rho_{\bar k}^{(1)}$ be the SPDMs after SP mode $|k\rangle$ is found to be occupied or empty, respectively, determined by the states 
\begin{equation}
|\Psi_k\rangle={\cal P}_k |\Psi\rangle/\sqrt{p_k}\,, \;\;|\Psi_{\bar k}\rangle={\cal P}_{\bar k}|\Psi\rangle/\sqrt{p_{\bar k}}\,,\label{pek}
\end{equation} 
with  $p_k=\langle\Psi|{\cal P}_k|\Psi\rangle=1-p_{\bar k}$. Then 
\begin{equation}
|\Psi\rangle=\sqrt{p_k}|\Psi_k\rangle+\sqrt{p_{\bar k}}|\Psi_{\bar{k}}\rangle\,.\label{pkk}\end{equation}
We will prove relation (\ref{avmaj}), i.e. (Fig \ref{fig:1}),
\begin{equation}
\bm\lambda(\rho^{(1)})\prec p_k\bm\lambda(\rho_k^{(1)})  + p_{\bar k}\bm\lambda(\rho^{(1)}_{\bar k})\,.\label{major}
\end{equation}

If the measured state $|k\rangle$ is a natural orbital, such that $\langle c^\dag_k c_{k'}\rangle=p_k\delta_{kk'}$ with  $p_k=\lambda_k$ an eigenvalue of $\rho^{(1)}$, Eq.\ \eqref{major} is straightforward: In this case   \eqref{pkk} leads to 
\begin{equation}
\rho^{(1)}=p_k\rho^{(1)}_k+p_{\bar{k}}\rho^{(1)}_{\bar{k}}\,,
\label{rhod}
\end{equation}

\begin{figure}[h]
\includegraphics[width=8cm,trim=0cm 1.5cm 0cm 0cm;clip]{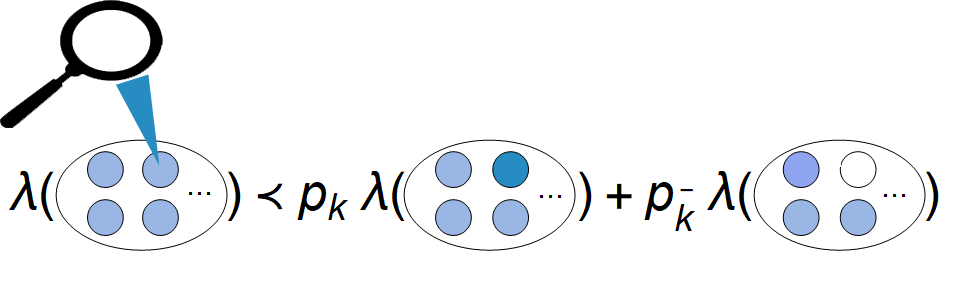}
\caption{Measurement of the occupancy of a single fermion mode $k$. It  reduces (or does not increase), on average, the mixedness of the SP density matrix  $\rho^{(1)}$ ($\bm{\lambda}$ denotes its  spectrum) and hence the one-body entanglement.}\label{fig:1}
\end{figure}

as $\langle\Psi_k|c^\dag_{k''}c_{k'} |\Psi_{\bar{k}}\rangle=\delta_{k''k}(1-\delta_{k'k})\langle c^\dag_k c_{k'}\rangle=0$ $\forall$ $k',k''$. Eq.\ \eqref{rhod}  implies \eqref{major} since $\bm{\lambda}(A+B)\prec\bm{\lambda}(A)+\bm{\lambda}(B)$ for any two  Hermitian $n\times n$ matrices $A$ and $B$ \cite{NV.01} (this case includes the trivial situation  $p_k=1$ or  $0$,   where $|\Psi\rangle=|\Psi_k\rangle$ or $|\Psi_{\bar{k}}\rangle$; in the following we consider $p_k\in(0,1)$).

Otherwise Eq.\ \eqref{rhod} no longer holds. Nevertheless, since $\langle\Psi_k|c^\dag_{k''}c_{k'} |\Psi_{\bar{k}}\rangle=0$ for any two SP states $|k'\rangle$ and $|k''\rangle$ orthogonal to $|k\rangle$,  Eq.\ \eqref{pkk} implies, for  any SP subspace ${\cal S}_{\perp}\subset{\cal H}$ orthogonal to the measured state $|k\rangle$,
\begin{equation}
\rho^{(1)}_{{\cal S}_{\perp}}=p_k\rho^{(1)}_{k{\cal S}_{\perp}}+p_{\bar{k}}\rho^{(1)}_{\bar{k}{\cal S}_{\perp}}\,,\label{rhort}
\end{equation}
where $\rho^{(1)}_{{\cal S}_\perp}=P_{{\cal S}_\perp}\rho^{(1)}P_{{\cal S}_\perp}$ and $\rho^{(1)}_{k(\bar{k}){\cal S}_\perp}$ are the restrictions of $\rho^{(1)}$ and $\rho^{(1)}_{k(\bar{k})}$ to ${\cal S}_\perp$, and $P_{{\cal S}_\perp}$ is the associated projector. This result is  expected  since the measurement   is ``external'' to  ${\cal S}_\perp$ (if $[{\cal K}_j,O]=0$  $\forall\, j$ $\Rightarrow$ ${\rm Tr}\,[\rho O]= {\rm Tr}\sum_j {\cal K}_j\rho{\cal K}_j^\dag O=\sum_j p_j{\rm Tr}[\rho_j O]$; for ${\cal K}_j={\cal P}_{k(\bar{k})}$ and $O=c^\dag_{k''}c_{k'}$ with $k'$ and $k''$ orthogonal to $k$, this result implies  Eq.\  \eqref{rhort}). And  for any ${\cal S}\subset{\cal H}$ containing the state $|k\rangle$, we have, as $\langle\Psi_k|c^\dag_{k'}c_{k'}|\Psi_{\bar{k}}\rangle=0$ for $k'=k$ or $k'$ orthogonal to $k$, 
\begin{equation}
{\rm Tr}\,\rho^{(1)}_{\cal S}={\rm Tr}\,[p_k \rho^{(1)}_{k{\cal S}}+p_{\bar{k}}\rho^{(1)}_{\bar{k}{\cal S}}]\,.\label{Tr}
\end{equation} 
We can now prove the  $m^{\rm th}$ inequality in \eqref{major}, 
\begin{equation}
\sum_{\nu=1}^m \lambda_\nu(\rho^{(1)})\leq \sum_{\nu=1}^m \;\left(p_k\lambda_\nu(\rho^{(1)}_k)+p_{\bar{k}}\lambda_\nu(\rho^{(1)}_{\bar{k}})\right)\,.\label{major1}
\end{equation}
Let  ${\cal S}_m\subset {\cal H}$ be the subspace spanned by the first  $m$  eigenstates of $\rho^{(1)}$, such that $\lambda_\nu(\rho^{(1)}_{{\cal S}_m}) =\lambda_\nu(\rho^{(1)})$ for $\nu\leq m$ and hence ${\rm Tr}\,\rho^{(1)}_{{\cal S}_m}=\sum_{\nu=1}^m \lambda_\nu(\rho^{(1)})$. If ${\cal S}_m$ is either orthogonal to $|k\rangle$ or fully contains $|k\rangle$, Eq.\ \eqref{rhort} or Eq.\ \eqref{Tr} holds for ${\cal S}={\cal S}_m$, implying 
\eqref{major1} since ${\rm Tr}\,\rho^{(1)}_{k(\bar{k}){\cal S}_m}\leq \sum_{\nu=1}^m\lambda_\nu(\rho^{(1)}_{k(\bar{k})})$ by the Ky Fan maximum principle \cite{NV.01} (the $m$ largest eigenvalues $\lambda_\nu$ of a Hermitian matrix $O$ satisfy $\sum_{\nu=1}^m\lambda_\nu \geq {\rm Tr}P'_m O=\sum_{\nu=1}^m\lambda'_\nu$ for any rank $m$ orthogonal projector $P'_m$, with $\lambda'_\nu$ the sorted eigenvalues of $P'_m OP'_m$). 

Otherwise we add to ${\cal S}_m$ the component  $|k_{\perp}\rangle$  of $|k\rangle$ orthogonal to  ${\cal S}_m$, obtaining an $m+1$ dimensional SP subspace ${\cal S}'_{m}$ where Eq.\ \eqref{Tr} holds and still $\lambda_\nu(\rho^{(1)}_{{\cal S}'_m})=\lambda_\nu(\rho^{(1)})$ for $\nu\leq m$. It is proved in the Appendix \ref{A} that  the remaining smallest eigenvalue satisfies 
\begin{equation}
\lambda_{m+1}(\rho^{(1)}_{{\cal S}'_{m}})\geq p_k\lambda_{m+1}(\rho^{(1)}_{k{\cal S}'_{m}})+p_{\bar{k}}\lambda_{m+1}(\rho^{(1)}_{\bar{k}{\cal S}'_{m}})\,. \label{ineq}
\end{equation}
Hence $\sum_{\nu=1}^m \lambda_\nu(\rho^{(1)})$ $ \leq \sum_{\nu=1}^m\, p_k\lambda_\nu(\rho^{(1)}_{k{\cal S}'_m})+p_{\bar{k}}\lambda_\nu(\rho^{(1)}_{\bar{k}{\cal S}'_m})$ 
due to the trace conservation  \eqref{Tr} for ${\cal S}={\cal S}'_m$, which implies  Eq.\ \eqref{major1} due to previous Ky Fan inequality. This completes the proof for pure states. It is easily verified (see Appendix \ref{B}) that these measurements map SDs onto SDs, as implied by Eq.\ \eqref{major}. 

The previous proof actually shows the general relation 
\begin{equation}
\bm\lambda(\rho^{(1)}_{\cal S})\prec p_k\bm\lambda(\rho_{k{\cal S}}^{(1)})  + p_{\bar k}\bm\lambda(\rho^{(1)}_{{\bar k}{\cal S}})\,, \label{majors2}
\end{equation}
valid for the restriction of $\rho^{(1)}$ to {\it any} subspace ${\cal S}\subset{\cal H}$ either containing or orthogonal to the measured  state $|k\rangle$ ($[P_{\cal S},|k\rangle\langle k|]=0$). Moreover  $\rho_{\cal S}^{(1)}$ will be determined by a mixed reduced state $\rho_{\cal S}={\rm Tr}_{{\cal S}_{\perp}}|\Psi\rangle\langle\Psi|$ satisfying $\langle \Psi|O_{\cal S}|\Psi\rangle={\rm Tr}\,\rho_{\cal  S}\,O_{\cal S}$ for any operator $O_{\cal S}$ involving just creation and annihilation of SP states $\in {\cal S}$ (see Appendix \ref{A}). Eq.\ \eqref{majors2} then shows that  \eqref{major} holds for {\it general mixed fermionic states} $\rho$ (assumed to commute with the fermion number  $\hat{N}=\sum_k c^\dag_k c_k$ or the number parity $e^{i\pi\hat{N}}$) since they can be purified and seen as a reduced state $\rho_{\cal S}$ of a pure fermionic state $|\Psi\rangle$ in an enlarged SP space (Eq.\ \eqref{psip} in Appendix \ref{A}). 
\end{proof}

\subsubsection{More general ONG measurements and operations}

By composing the basic measurements \eqref{Pk}, more complex operations  satisfying \eqref{avmaj} are obtained. In particular, a measurement in a basis of SDs,  which is obviously ONG, results from the composition of all measurements $\{P_k,P_{\bar{k}}\}$ in a given SP basis. Extension of the set of free operations beyond the standard FLO set can also be considered. The proof of theorem \ref{t1} can be extended to more general single-mode measurements: 

\begin{corol}
A general measurement on single-particle mode $k$ described by the operators \begin{equation}
{\cal M}_k=\alpha\, {\cal P}_k+\beta\,{\cal P}_{\bar{k}}\,,\;\; {\cal M}_{\bar k}=\gamma\,{\cal P}_{k}+\delta\, {\cal P}_{\bar{k}}\,, \label{Mk}
\end{equation}
where ${\cal M}_k^\dag {\cal M}_k+{\cal M}_{\bar k}^\dag {\cal M}_{\bar k}=\mathbbm{1}$  ($|\alpha|^2+|\gamma|^2= |\beta|^2+|\delta|^2=1$) is also a ONG operation.
\end{corol}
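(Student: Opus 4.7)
The plan is to observe that the two Kraus operators $\{{\cal M}_k,{\cal M}_{\bar k}\}$ are diagonal in the two-dimensional block spanned by the projectors $\{{\cal P}_k,{\cal P}_{\bar k}\}$, so the proof of Theorem~\ref{t1} can be mimicked almost verbatim once the two key identities \eqref{rhort} and \eqref{Tr} are reestablished in this more general setting. Concretely, writing as in \eqref{pkk} any pure $|\Psi\rangle$ as $\sqrt{p_k}|\Psi_k\rangle+\sqrt{p_{\bar k}}|\Psi_{\bar k}\rangle$, the normalized post-measurement states are
\begin{equation}
|\Phi_k\rangle=\tfrac{1}{\sqrt{q_k}}\bigl(\alpha\sqrt{p_k}\,|\Psi_k\rangle+\beta\sqrt{p_{\bar k}}\,|\Psi_{\bar k}\rangle\bigr),\quad
|\Phi_{\bar k}\rangle=\tfrac{1}{\sqrt{q_{\bar k}}}\bigl(\gamma\sqrt{p_k}\,|\Psi_k\rangle+\delta\sqrt{p_{\bar k}}\,|\Psi_{\bar k}\rangle\bigr),\nonumber
\end{equation}
with probabilities $q_k=|\alpha|^2p_k+|\beta|^2p_{\bar k}$ and $q_{\bar k}=|\gamma|^2p_k+|\delta|^2p_{\bar k}$, so that $q_k+q_{\bar k}=1$ by the normalization condition on $\alpha,\beta,\gamma,\delta$.

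Next I would verify the generalized analog of \eqref{rhort}. For any subspace ${\cal S}_\perp\subset{\cal H}$ orthogonal to $|k\rangle$ and any $k',k''\in{\cal S}_\perp$, the bilinear $c^\dag_{k''}c_{k'}$ commutes with both ${\cal P}_k$ and ${\cal P}_{\bar k}$, hence with ${\cal M}_k$ and ${\cal M}_{\bar k}$. Consequently ${\cal M}_k^\dag c^\dag_{k''}c_{k'}{\cal M}_k=c^\dag_{k''}c_{k'}\,(|\alpha|^2{\cal P}_k+|\beta|^2{\cal P}_{\bar k})$, and using the cross-term vanishing $\langle\Psi_k|c^\dag_{k''}c_{k'}|\Psi_{\bar k}\rangle=0$ already exploited in Theorem~\ref{t1}, I get
\begin{equation}
q_k\rho^{(1)}_{{\cal M}_k\,{\cal S}_\perp}+q_{\bar k}\rho^{(1)}_{{\cal M}_{\bar k}\,{\cal S}_\perp}
=\bigl(|\alpha|^2+|\gamma|^2\bigr)p_k\rho^{(1)}_{k\,{\cal S}_\perp}+\bigl(|\beta|^2+|\delta|^2\bigr)p_{\bar k}\rho^{(1)}_{\bar k\,{\cal S}_\perp}
=\rho^{(1)}_{{\cal S}_\perp},\nonumber
\end{equation}
which is exactly \eqref{rhort} with $p_{k(\bar k)}$ replaced by $q_{k(\bar k)}$. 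For subspaces ${\cal S}$ containing $|k\rangle$, the trace identity \eqref{Tr} is recovered in the same way on the orthogonal complement, while the contribution of the mode $|k\rangle$ itself follows from ${\cal M}_k^\dag{\cal P}_k{\cal M}_k=|\alpha|^2{\cal P}_k$ and ${\cal M}_{\bar k}^\dag{\cal P}_k{\cal M}_{\bar k}=|\gamma|^2{\cal P}_k$, which together give $q_k\langle c^\dag_kc_k\rangle_{\Phi_k}+q_{\bar k}\langle c^\dag_kc_k\rangle_{\Phi_{\bar k}}=(|\alpha|^2+|\gamma|^2)p_k=p_k$.

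Once these two relations are in place, the remainder of the argument is identical to that of Theorem~\ref{t1}: given the subspace ${\cal S}_m$ spanned by the $m$ largest eigenvectors of $\rho^{(1)}$, we either already have one of \eqref{rhort}--\eqref{Tr} available, or we enlarge ${\cal S}_m$ by the component of $|k\rangle$ orthogonal to it and invoke the appendix lemma \eqref{ineq} (whose proof depends only on \eqref{rhort} and \eqref{Tr}, not on the specific form of the Kraus operators). The Ky Fan maximum principle then yields the $m$-th partial-sum inequality in \eqref{major} with $(p_k,p_{\bar k})$ replaced by $(q_k,q_{\bar k})$, establishing the majorization \eqref{avmaj}. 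Extension to mixed states is again obtained by purification, as in the proof of Theorem~\ref{t1}.

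The only potential obstacle is checking that the appendix lemma \eqref{ineq} really is agnostic to the particular Kraus operators and relies solely on the two algebraic identities above; assuming this is so (as the statement in the main text suggests), the corollary reduces to the algebraic verification of steps two and three, which is the routine part.
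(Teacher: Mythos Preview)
Your overall strategy matches the paper's: rewrite the post-measurement states, re-establish the analogs of \eqref{rhort} and \eqref{Tr} (this part is fine and is exactly what the paper does, calling them Eq.~\eqref{9}), and then reduce to a Ky Fan argument on the enlarged subspace ${\cal S}'_m$. The gap is in the step you flag yourself as the ``only potential obstacle'': the appendix inequality \eqref{ineq} is \emph{not} agnostic to the Kraus operators. Its proof in Appendix~\ref{A} uses two facts that are specific to the projective pair $\{{\cal P}_k,{\cal P}_{\bar k}\}$: first, that $\lambda_{m+1}(\rho^{(1)}_{\bar k\,{\cal S}'_m})=0$ because mode $k$ is strictly empty in $|\Psi_{\bar k}\rangle$; and second, the simple bound $p_k\lambda_{m+1}(\rho^{(1)}_{k\,{\cal S}'_m})\leq p_{kk'}$. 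For the generalized Kraus operators neither holds---$|\Phi_{\bar k}\rangle$ is a nontrivial superposition of $|\Psi_k\rangle$ and $|\Psi_{\bar k}\rangle$, so its SPDM restricted to ${\cal S}'_m$ has no vanishing eigenvalue in general.

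What the paper actually does for Corollary~1 is redo the $2\times 2$ eigenvalue estimate from scratch: it writes down the explicit $2\times 2$ blocks whose smallest eigenvalues $\lambda_{k-}$ and $\lambda_{\bar k-}$ bound $q_k\lambda_{m+1}({\rho'}^{(1)}_{k\,{\cal S}'_m})$ and $q_{\bar k}\lambda_{m+1}({\rho'}^{(1)}_{\bar k\,{\cal S}'_m})$ from above, and then verifies directly (via the closed-form eigenvalue expressions and the constraint $|\alpha|^2+|\gamma|^2=|\beta|^2+|\delta|^2=1$) that $\lambda_--\lambda_{k-}-\lambda_{\bar k-}\geq 0$. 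This computation is the actual content of the corollary and is not a consequence of \eqref{rhort}--\eqref{Tr} alone. Your proposal would be complete once you carry out this new $2\times 2$ analysis; without it the argument does not close.
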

The proof is given in Appendix A and implies that Eq.\ \eqref{major} will be satisfied for $\rho_{k(\bar{k})}^{(1)}$ the SPDMs obtained from $|\Psi'_{k(\bar{k})}\rangle\propto{\cal M}_{k(\bar{k})}|\Psi\rangle$ and $p_k\rightarrow p'_k=\langle \Psi|{\cal M}_k^\dag{\cal M}_k|\Psi\rangle=1-p'_{\bar{k}}$. This result entails  that {\it any} pair of Kraus operators ${\cal M}_k$ and ${\cal M}_{\bar{k}}$ for the occupation measurement  operation  \begin{equation}
\varepsilon(\rho)={\cal P}_k\rho{\cal P}_{k}+{\cal P}_{\bar{k}}\rho{\cal P}_{\bar{k}}={\cal M}_k\rho{\cal M}_k^\dag+{\cal M}_{\bar{k}}\rho{\cal M}_{\bar{k}}^\dag\end{equation}
will also be ONG operations, since they are a special case of \eqref{Mk} [$\alpha \beta^*+\gamma\delta^*=0$, i.e.\ $(^{\alpha\;\beta}_{\gamma\;\delta})$ unitary].

It is also possible to consider in the present context operations which do not conserve the fermion number $N$ but still generate states with definite particle number when applied to such states. In this case it becomes necessary to extend the partial order \eqref{2} to states with different particle number. We then notice that Eq.\ \eqref{2} implies a similar majorization relation (see Appendix \ref{C}) 
\begin{equation}
\bm{\lambda}(D^{(1)}_\Psi)\prec \bm{\lambda}(D^{(1)}_{\Phi})\label{Dm}
\end{equation} 
for the sorted eigenvalues of the extended $2n\times 2n$ SPDM 
\begin{equation}
D^{(1)}=\rho^{(1)}\oplus(\mathbbm{1}-\rho^{(1)\,T})=\begin{pmatrix}\rho^{(1)}&0\\0&\mathbbm{1}-\rho^{(1)\,T}\end{pmatrix}\,, \label{D1}
\end{equation} 
with spectrum $(\bm{\lambda}(\rho^{(1)}),1-\bm{\lambda}(\rho^{(1)}))$ and elements $\langle c^\dag_{k'}c_k\rangle$ and $\langle c_{k'}c^\dag_{k}\rangle$, the trace ${\rm Tr}\,D^{(1)}=n$ of which is fixed by  the SP space dimension and is $N$-{\it independent}. For general states we then say that  $|\Psi\rangle$ is not less one-body entangled than $|\Phi\rangle$  if Eq.\ \eqref{Dm} holds. Note that {\it all} SDs lead to the same sorted spectrum $\bm{\lambda}(D^{(1)})$ regardless of $N$, all being then least entangled states, with $(D^{(1)})^2=D^{(1)}$ if and only if $|\Psi\rangle$ is a SD. Similarly, Eq.\ \eqref{avmaj} for number conserving ONG operations implies 
\begin{equation}
\bm{\lambda}(D^{(1)})\prec \sum_j p_j\bm{\lambda}(D^{(1)}_j)\label{Dj}
\end{equation} 
for the extended densities. We then say that an  operation not conserving fermion number is ONG if it admits a set of Kraus operators ${\cal K}_j$ such that \eqref{Dj} is satisfied. Prop.\ 1 remains then  valid for general ONG operations replacing \eqref{2} by \eqref{Dm}. All previous properties satisfied by  the entropies  \eqref{E} extend to entropies  \begin{equation}
E(|\Psi\rangle)=S(D^{(1)})\,,\label{ED}
\end{equation}
with ${\rm Tr}f(D^{(1)})={\rm Tr}f(\rho^{(1)})+{\rm Tr}f(\mathbbm{1}-\rho^{(1)})$. In particular,  Eq.\  \eqref{4} becomes just the von Neumann entropy of $D^{(1)}$.  Maximally one-body entangled states are now those leading to
\begin{equation}
D^{(1)}=\mathbbm{1}_{2n}/2\,,
\end{equation}
which will maximize all entropies \eqref{ED}. Examples are previous GHZ-like states \eqref{FGZ} in half-filled SP spaces ($m=2$, $N=n/2\geq 1$). 
 
Theorem 1 then implies the  following result for the basic measurement having $c_k$ and $c^\dag_k$ as  operators: 
\begin{corol}
A measurement on single-particle mode $k$ described by the operators $c_k$ and $c^\dag_k$, which satisfy  $c^\dag_kc_k+c_k c^\dag_k=\mathbbm{1}$, is a ONG operation: 
\begin{equation}
\bm{\lambda}(D^{(1)})\prec p_k\bm{\lambda}(D^{(1)}_k)+p_{\bar{k}}\bm{\lambda}(D^{(1)}_{\bar{k}})\,.\label{leD}
\end{equation}
\end{corol}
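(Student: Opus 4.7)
The plan is to deduce Corollary 2 from Theorem 1 together with the particle--hole symmetry built into the extended SPDM $D^{(1)}$: the pair $\{c_k,c_k^\dagger\}$ differs from the occupancy projectors $\{{\cal P}_k,{\cal P}_{\bar k}\}$ only by an operation that empties, respectively fills, mode $k$ after the projection, and this difference manifests itself at the level of $D^{(1)}$ as a harmless $0\leftrightarrow 1$ exchange between its two diagonal blocks, leaving $\bm\lambda(D^{(1)})$ unchanged.

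First, the identity $c_k c_k^\dagger c_k = c_k$ (and its dual $c_k^\dagger c_k c_k^\dagger = c_k^\dagger$) gives $c_k = c_k {\cal P}_k$ and $c_k^\dagger = c_k^\dagger {\cal P}_{\bar k}$, so the outcome probabilities $p_k = \langle c_k^\dagger c_k\rangle$, $p_{\bar k} = \langle c_k c_k^\dagger\rangle$ coincide with those of Theorem 1 and the post-measurement states take the form $|\Psi'_k\rangle = c_k|\Psi_k\rangle$ and $|\Psi''_{\bar k}\rangle = c_k^\dagger|\Psi_{\bar k}\rangle$, with $|\Psi_{k(\bar k)}\rangle$ as in \eqref{pek}. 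Using the anticommutation relations, the fact that $c_k^\dagger c_k$ commutes with both $c_{k'}$ and $c_{k'}^\dagger$ for $k'\neq k$, and the eigenstate properties $c_k^\dagger c_k|\Psi_k\rangle = |\Psi_k\rangle$, $c_k c_k^\dagger|\Psi_{\bar k}\rangle = |\Psi_{\bar k}\rangle$, a short calculation yields
\begin{equation}
\langle \Psi'_k|c_{k''}^\dagger c_{k'}|\Psi'_k\rangle = \langle \Psi_k|c_{k''}^\dagger c_{k'}|\Psi_k\rangle \quad (k',k''\neq k),
\end{equation}
and the analogous identity for $|\Psi''_{\bar k}\rangle$. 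Since in addition $|\Psi'_k\rangle$ has mode $k$ definitely empty and $|\Psi''_{\bar k}\rangle$ has it definitely occupied, with zero off-diagonals in row/column $k$ in both cases, the new SPDMs differ from those of Theorem 1 only in the $(k,k)$ entry: $\rho^{(1)\prime}_k = \rho^{(1)}_k - |k\rangle\langle k|$ and $\rho^{(1)\prime\prime}_{\bar k} = \rho^{(1)}_{\bar k} + |k\rangle\langle k|$.

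The final observation is that the spectrum of $D^{(1)} = \rho^{(1)}\oplus(\mathbbm{1} - \rho^{(1)T})$ is invariant under precisely such a $1\leftrightarrow 0$ swap at a diagonal entry: a $1$ removed from the upper block is compensated by a $1$ appearing in the lower block, and conversely for $0$'s, so the $2n$-eigenvalue multiset is preserved. Hence $\bm\lambda(D^{(1)\prime}_k) = \bm\lambda(D^{(1)}_k)$ and $\bm\lambda(D^{(1)\prime\prime}_{\bar k}) = \bm\lambda(D^{(1)}_{\bar k})$, where on the right-hand side the quantities refer to the Theorem 1 measurement. Inserting these equalities into the extended majorization \eqref{Dj}---which follows from Theorem 1 applied to the occupancy measurement---produces \eqref{leD}. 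The main technical step is the algebraic identity connecting the two SPDMs on the subspace orthogonal to $|k\rangle$; it is, however, purely mechanical once $c_k = c_k {\cal P}_k$ and $c_k^\dagger = c_k^\dagger {\cal P}_{\bar k}$ are in hand.
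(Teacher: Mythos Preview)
Your proof is correct and follows essentially the same route as the paper: both arguments observe that the post-measurement states $c_k|\Psi\rangle/\sqrt{p_k}$ and $c_k^\dagger|\Psi\rangle/\sqrt{p_{\bar k}}$ yield extended SPDMs $D^{(1)}$ with the \emph{same spectrum} as those obtained from ${\cal P}_k|\Psi\rangle/\sqrt{p_k}$ and ${\cal P}_{\bar k}|\Psi\rangle/\sqrt{p_{\bar k}}$, so that Theorem~1 together with the implication \eqref{avmaj}$\Rightarrow$\eqref{Dj} immediately gives \eqref{leD}. The paper states this spectral equality as ``clearly'' true, while you spell out the underlying mechanism (the $0\leftrightarrow 1$ swap at the $(k,k)$ entry compensated between the two blocks of $D^{(1)}$); this is a welcome elaboration but not a different strategy.
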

Here $p_k=\langle c^\dag_k c_k\rangle=1-p_{\bar{k}}$, and $D^{(1)}$ and $D^{(1)}_{k(\bar{k})}$ are the extended SPDMs determined by $\rho$, $\rho_k=c_k\rho c_k^\dag/p_k$ and $\rho_{\bar{k}}=c^\dag_k\rho c_k/p_{\bar{k}}$. Since these extended densities $D^{(1)}_{k(\bar{k})}$ have clearly the {\it same} spectrum as those obtained from ${\cal P}_k\rho{\cal P}_k/p_k$ and ${\cal P}_{\bar{k}}\rho{\cal P}_{\bar{k}}/p_{\bar{k}}$, Eq.\ \eqref{leD} directly follows from Theorem 1 and Eq.\  \eqref{C2}. On the other hand, previous basic occupation measurement \eqref{Pk} is just the composition of this measurement with itself (see Appendix \ref{C}).

This extension then enables us to consider the addition of free ancillas (SDs of arbitrary $N$) as a free operation, as it  will not alter the  spectrum of the extended SPDM $D^{(1)}$ in the full SP space. We remark, however,  that general ``active'' FLO operations which do not conserve the fermion number  (for instance, a Bogoliubov transformation) may increase the one-body entanglement determined by $\rho^{(1)}$. While we will not discuss these operations here,  we mention that if they are also regarded as free one should consider instead the generalized one-body entanglement, determined by the mixedness of the full quasiparticle DM \cite{GR.15}, as the associated resource (see also Appendix \ref{C}). 

\subsection{One-body entanglement as a resource\label{B4}}

The identification of number conserving FLO operations as ONG implies  that they map SDs onto SDs, as verified in App.\  \ref{B}. It  has been noted  \cite{DT.05,MCT.13} that this fact ultimately explains why the pure state FLO computation model can be efficiently simulated classically, as matrix elements of free unitaries and outcome probabilities of free measurements can be reduced to overlaps $\langle \Psi|\Phi\rangle$ between SDs, which can be computed in polynomial time through a determinant \cite{TD.02}.

In contrast, the simultaneous measurement of the occupancy of two SP modes $k$ and $k'$, described by operators $\{{\cal M}_0={\cal P}_{\bar k}{\cal P}_{\bar k'},{\cal M}_1={\cal P}_{\bar k}{\cal P}_{k'}+{\cal P}_{k}{\cal P}_{\bar k'}, {\cal M}_2={\cal P}_{k}{\cal P}_{k'}\}$, is not free since ${\cal M}_1$ can map a SD onto a state with Slater number 2 \cite{DT.05},   i.e.\ a one-body entangled state. A similar measurement with $m$ such outcomes may return a state with an exponentially large ($2^m$) Slater number \cite{DT.05}, the expectation values of which would be hard to evaluate classically. In  \cite{BD.04} this operation is identified with a charge detection measurement in a system of free electrons, showing that it is possible to build a controlled-NOT gate with just beamsplitters, spin rotations and charge detectors. The extended set of FLO plus charge detection operations then enables quantum computation. If the computational power of this model is to be linked to the presence of a quantum resource, the ensuing free states and operations would be $\cal S$ and $\cal O$, respectively, and the results derived here entail  that one-body entanglement would be an associated resource. 

One-body entanglement can also be considered as a resource for   mode entanglement. In particular,  mode entanglement   with {\it definite particle number $N$ or definite number parity $e^{i\pi N}$ at each component  requires one-body entanglement}. A first example was seen with the normal form \eqref{sfn2} for a general two-fermion state \cite{SC.01,SL.01,ES.02}, where the entanglement between the modes $k$ ($A$) and $\bar{k}$ ($B$), containing each one fermion, is directly linked to one-body entanglement: The entanglement entropy $E(A,B)=S(\rho_A)=S(\rho_B)$ of this partition is just 
\begin{equation}
E(A,B)=\sum_\nu f(\lambda_\nu)=\tfrac{1}{2}E(|\Psi\rangle)
\end{equation}
for any entropy $S(\rho)={\rm Tr}\,f(\rho)$,   where $E(|\Psi\rangle)=S(\rho^{(1)})$ is the corresponding one-body entanglement entropy \eqref{E} (as $\langle c^\dag_\nu c_{\nu'}\rangle=\langle c^\dag_{\bar{\nu}} c_{\bar{\nu}'}\rangle=\lambda_\nu\delta_{\nu\nu'}$, $\langle c^\dag_\nu c_{\bar{\nu}'}\rangle=0$). In particular any one-body entangled state of two fermions in a SP space of dimension 4 can be seen as an entangled state of two distinguishable qubits, allowing then the realization of tasks like quantum teleportation \cite{GR.17}. In this case the one-body entanglement entropy also provides a lower bound to any bipartite mode entanglement entropy \cite{GR.17}.

For a general pure fermionic state $|\Psi\rangle$ (with definite particle number or number parity) we now show that one-body entanglement  {\it is always  required} in order to have bipartite mode entanglement entropy $E(A,B)>0$ with definite particle number, or in general definite number parity, at each side $A$ and $B$: In such a case, and assuming sides $A$ and $B$ correspond to orthogonal subspaces ${\cal H}_A$ and ${\cal H}_B$ of the SP space ${\cal H}={\cal H}_A\oplus{\cal H}_B$, the ensuing SPDM takes the block-diagonal form $\rho^{(1)}=\rho_A^{(1)}\oplus\rho_B^{(1)}$, i.e., 
\begin{equation}
\rho^{(1)}=\begin{pmatrix} \rho^{(1)}_A&0\\0&\rho^{(1)}_B\end{pmatrix}
\end{equation}
since for  any $k_A\in{\cal H}_A$, $k_B\in{\cal H}_B$, $c^\dag_{k_A}c_{k_B}$ connects states with different number parity at each side and hence $\langle c^\dag_{k_A}c_{k_B}\rangle=0$ in such a state. Then, if $|\Psi\rangle$ is a SD, $(\rho^{(1)})^2=\rho^{(1)}$, implying  $(\rho^{(1)}_{A(B)})^2=\rho^{(1)}_{A(B)}$, i.e.\ the state at each side must be a SD (a pure state) and no $A$--$B$ entanglement is directly present. For instance, a single fermion state $\frac{1}{\sqrt{2}}(c^\dag_{k_A}+c^\dag_{k_B})|0\rangle$ implies entanglement between $A$ and $B$  but at the expense of involving zero and one fermion at each side, i.e., no definite local number parity. In contrast,  $|\Psi\rangle=\frac{1}{\sqrt{2}}(c^\dag_{k_A}c^\dag_{k_B}+c^\dag_{k'_A}c^\dag_{k'_B})|0\rangle$ leads to entanglement between $A$ and $B$ with definite fermion number (and hence number parity) at each side, but it is not a SD, i.e., it has nonzero one-body entanglement.  
 
Expanding the state in a SD basis as $|\Psi\rangle=\sum_{\mu,\nu} C_{\mu\nu}A^\dag_\mu B^\dag_{\nu}|0\rangle$, where $A^\dag_\mu=\prod_k(c^\dag_{k_A})^{n_{k_\mu}}$ and $B^\dag_\nu=\prod_k(c^\dag_{k_B})^{n_{k_\nu}}$  involve creation operators just on ${\cal H}_A$ and ${\cal H}_B$ respectively (with $n_{k_{\mu(\nu)}}=0,1$ and $\mu$ and $\nu$ labeling all possible sets of occupation numbers, such that  $\langle 0|A_\mu A^\dag_{\mu'}|0\rangle=\delta_{\mu\mu'}$, $\langle 0|B_\nu B^\dag_{\nu'}|0\rangle=\delta_{\nu\nu'}$), states with definite number parity at each side correspond to $(-1)^{\sum_k n_{k_\mu}}$ and  $(-1)^{\sum_k n_{k_\nu}}$ fixed for all $\mu$ and $\nu$ with $C_{\mu\nu}\neq 0$. The reduced DM of side $A$ is $\rho_A=\sum_{\mu,\mu'}(CC^\dag)_{\mu\mu'}A^\dag_{\mu}|0\rangle\langle0|A_{\mu'}$ (and similarly for $\rho_B$; see Appendix \ref{A}), and there is entanglement between $A$ and $B$ whenever $\rho_A$ has rank $\geq 2$, i.e., rank$(C)\geq 2$. In such a case, previous argument shows that such $|\Psi\rangle$ cannot be a SD if the fermion number or number parity is fixed at each side, i.e. $N_A$ or $e^{i\pi N_A}$ fixed. 

Due to the fermionic number parity superselection rule \cite{FL.13,Fr.16}, fixed number parity at each side is required in order to be able to form arbitrary superpositions, i.e., arbitrary unitary transformations   of the local states, and hence to have entanglement fully equivalent to the distinguishable case. Fixed particle number at each side may be also required if the particle number or charge superselection rule applies for the fermions considered.  

The extension to multipartite mode entanglement is straightforward: For a decomposition ${\cal H}=\bigoplus_i {\cal H}_i$ of the SP space into orthogonal subspaces ${\cal H}_i$, and for component $i$ associated to   subspace ${\cal H}_i$,  all elements of $\rho^{(1)}$ connecting different components $i$ and $j$  will vanish if each component is to have definite fermion number or number parity in a state $|\Psi\rangle$: $\langle c^\dag_{k_i}c_{k_j}\rangle=0$ $\forall$ $k_i,k_j$  if $i\neq j$. Thus, 
\begin{equation}
\rho^{(1)}=\bigoplus_i\rho_i^{(1)}\,.\label{47}
\end{equation}
If $|\Psi\rangle$ were a SD, each $\rho^{(1)}_i$ should  then satisfy $(\rho^{(1)}_i)^2=\rho^{(1)}_i$ and hence each subsystem would be in a pure SD state, implying no entanglement between them. A one-body entangled state is then required. 

And when all previous subsystems contain just one fermion, one-body entanglement is directly linked to standard multipartite entanglement.  Consider an $N$-partite system with Hilbert space ${\cal L}=\bigotimes_{i=1}^{N}{\cal L}_i$, where ${\cal L}_i$ is the Hilbert space of the $i$-th (distinguishable) constituent. Consider also an $N$-fermion system with  SP space ${\cal H}=\bigoplus_{i=1}^N {\cal H}_i$, such that ${\rm dim}\,{\cal H}_i={\rm dim}\,{\cal L}_i$. This  enables the definition of an isomorphism $\Theta_i:{\cal L}_i\rightarrow {\cal H}_i$ 
between these two spaces: Any pure separable state in $\cal L$, $|S\rangle_{\cal L}=\bigotimes_{i=1}^{N}|\phi_i\rangle$, with $|\phi_i\rangle\in{\cal L}_i$, can be mapped to a SD 
\begin{equation}
|\Psi\rangle=[\prod_{i=1}^N c^\dagger_{i,\phi}]|0\rangle,=\Theta(|S\rangle_{\cal L}),\label{ib1}
\end{equation}
where $c^\dagger_{i, \phi}$ creates a fermion in the state $\Theta_i(|\phi_i\rangle)\in{\cal H}_i$. 

The map $\Theta:{\cal L}\rightarrow{\cal F}$ is an isomorphism between $\cal L$ and the subspace of $\cal F$ determined by the fermion states having occupation number $N_i=1$ in ${\cal H}_i\subset{\cal H}$. Hence any pure state $|\Phi\rangle_{\cal L}$ of the multipartite system is mapped to a state $|\Psi\rangle=\Theta(|\Phi\rangle_{\cal L})$ in $\cal F$. Since there is a fixed fermion  number (1) in each constituent, the SPDM $\rho^{(1)}$ determined by $|\Psi\rangle$ will satisfy Eq.\ \eqref{47}, implying here 
\begin{equation}
\rho^{(1)}=\bigoplus_i \rho_i,  \label{ibd}
\end{equation} 
where the elements of the matrix $\rho_i$ are those of the reduced state of the $i$-th subsystem associated to $|\Phi\rangle_{\cal L}$. Thus, one-body entanglement monotones become  $E(|\Psi\rangle)={\rm Tr}\,f(\rho^{(1)})=\sum_i {\rm Tr}\,f(\rho_i)$, being then equivalent to  the multipartite version of the {\it linear entropy of entanglement} \cite{BK.04, WD.13,MW.02,BM.08} and constituting monotones for the multipartite entanglement of the tensor product representation.

This link between one-body entanglement and multipartite entanglement is not a surprise if we recall that performing local operations on the multipartite system cannot increase, on average, the mixedness of the local eigenvalues. Relation \eqref{ibd} then implies that these operations cannot increase the mixedness of the SPDM associated to the fermionic representation, in agreement with Eq.\ (2). And any local unitary in the tensor product representation can be implemented as  a one-body unitary in the fermion system, while  local projective measurements can be performed as occupation measurements, both  FLO operations  which are one-body entanglement nongenerating. The overlap between one-body entanglement and multipartite entanglement described here reinforces the idea that the first could be the resource behind the computational power of the quantum computation model described in \cite{BD.04}, since it  consists of a mapping from qubits to fermions just like the map $\Theta$ defined above.

\section{Conclusions}

We have shown that one-body entanglement,  a measure of the deviation of a pure fermionic state from a SD determined by the mixedness of the SPDM $\rho^{(1)}$, can be considered as a quantum resource. We have first provided a basis-independent bipartite like formulation of one-body entanglement in general $N$-fermion states, which relates it with the correlation between one and ($N-1$)-body observables and is analogous to that of systems of distinguishable components. Such formulation  leads to a Schmidt-like decomposition of the state, which contains the common eigenvalues of the one and $(N-1)$-body DMs. 

We have then identified the class of one-body entanglement nongenerating operations through the  rigorous majorization relation \eqref{avmaj}. And we have  shown in Theorem 1 that single mode occupation measurements satisfy indeed this relation, implying they will not increase, on average,  the one-body entanglement here defined. The ensuing theory then has SDs as free states and number conserving FLO operations as free operations.  We have also considered  in corollary 1 and 2 more general occupation measurements, showing they are also ONG. Connections with mode entanglement and multipartite entanglement have also been discussed, showing in particular that one-body entanglement is required for  entanglement with fixed fermion number or number parity at each subsystem. Present results provide the basis for a consistent resource theory associated to quantum  correlations beyond antisymmetrization in fermionic systems, which should play a fundamental role in fermionic protocols beyond the FLO model. 

\acknowledgments
The authors acknowledge support from CONICET (N.G. and M.DT.) and CIC (R.R.) of Argentina. This work was supported by CONICET PIP Grant No. 11220150100732.

\appendix

\section{Proof of inequality \eqref{ineq}  and Corollary 1\label{A}} 

\begin{proof}
We first prove  Eq.\ \eqref{ineq} for the lowest eigenvalue of $\rho^{(1)}_{{\cal S}'_m}$, where ${\cal S}'_m$ is the subspace  containing the measured SP state $|k\rangle$ and the first $m$ eigenstates of $\rho^{(1)}$. We  write the lowest eigenstate  of $\rho^{(1)}_{{\cal S}'_m}$ as $\alpha |k\rangle+\beta |k'\rangle$, with $|k'\rangle\in{\cal S}'_m$ orthogonal to  $|k\rangle$, such that $\lambda_{m+1}(\rho^{(1)}_{{\cal S}'_m})$  is the smallest eigenvalue $\lambda_-$ of the $2\times 2$ matrix
\begin{equation}
\rho^{(1)}_{kk'}=\begin{pmatrix}
\langle c^\dagger_k c_k\rangle&&\langle c^\dagger_{k'} c_k\rangle\\
\langle c^\dagger_k c_{k'}\rangle&&\langle c^\dagger_{k'} c_{k'}\rangle
\end{pmatrix}\label{mkp}\,.
\end{equation}
Setting $\langle c^\dagger_{k} c_{k}\rangle=p_k$, its eigenvalues are  
\begin{equation} 
\lambda_\pm ={\textstyle\frac{p_k+p_{k'}}{2}\pm\sqrt{\frac{(p_k-p_{k'})^2}{4}+|\langle c^\dag_{k'} c_{k}\rangle|^2}}\label{ineqm1}\,.
\end{equation}
Writing again ${\cal P}_k=c^\dag_k c_k$, ${\cal P}_{\bar{k}}=c_kc^\dagger_k=\mathbbm{1}-{\cal P}_k$, a general state $|\Psi\rangle$ can be expanded as 
\begin{equation}
{\textstyle|\Psi\rangle=\sum_{\mu,\mu'}{\cal P}_\mu {\cal P}_{\mu'}|\Psi\rangle =\sum_{\mu,\mu'}\sqrt{p_{\mu\mu'}}|\Psi_{\mu\mu'}\rangle\,,}
\end{equation} 
where $\mu=k,\bar{k}$; $\mu'=k',\bar{k}'$ (so that $\sum_{\mu,\mu'}{\cal P}_\mu{\cal P}_{\mu'}=\mathbbm{1}$)  and $|\Psi_{\mu\mu'}\rangle={\cal P}_{\mu}{\cal P}_{\mu'}|\Psi\rangle/\sqrt{p_{\mu\mu'}}$, with $p_{\mu\mu'}=\langle \Psi|{\cal P}_\mu{\cal P}_{\mu'}|\Psi\rangle$, are states with definite occupation of SP states $|k\rangle$ and $|k'\rangle$. We then have $p_k=p_{kk'}+p_{k\bar{k'}}$, $p_{k'}= p_{kk'}+p_{\bar{k}k'}$ and $\langle c^\dag_{k'}c_{k}\rangle= r\sqrt{p_{\bar{k}k'}p_{k\bar{k'}}}$, with $|r|\leq 1$. Thus, $|\langle c^\dag_{k'}c_{k}\rangle|^2\leq p_{\bar{k}k'}p_{k\bar{k}'}$, and \eqref{ineqm1}   implies $\lambda_-\geq \frac{p_k+p_{k'}}{2}-\frac{p_{k\bar{k}'}+p_{\bar{k}k'}}{2}$, i.e., 
\begin{equation}
\lambda_-\geq  p_{kk'}\geq p_k \lambda_{m+1}(\rho^{(1)}_{k{\cal S}'_m})+p_{\bar{k}}\lambda_{m+1}(\rho^{(1)}_{\bar{k}{\cal S}'_m})\,,\label{ineqm2}
\end{equation}
since $p_k \lambda_{m+1}(\rho^{(1)}_{k{\cal S}'_m})\leq p_{kk'}$ and $\lambda_{m+1}(\rho^{(1)}_{\bar{k}{\cal S}'_m})=0$ (as the state $k$ is empty in $\rho^{(1)}_{\bar{k}{\cal S}'_m}$). This implies  Eq.\ \eqref{ineq} since $\lambda_{m+1}(\rho^{(1)}_{{\cal S}'_m})=\lambda_-$.
\end{proof}
We also mention that by considering the largest eigenvalue $\lambda_+$ in \eqref{ineqm1} of a similar $2\times 2$ block, such that the first eigenstate (largest eigenvalue) of $\rho^{(1)}$ is spanned by $|k\rangle$ and $|k'\rangle$, it is verified, using again $|\langle c^\dag_{k'}c_{k}\rangle|^2\leq p_{\bar{k}k'}p_{k\bar{k}'}$, that $\lambda_1(\rho^{(1)})=\lambda_+\leq p_k+p_{\bar{k}k'}\leq p_k\lambda_1(\rho^{(1)}_k)+p_{\bar{k}}\lambda_1(\rho^{(1)}_{\bar{k}})$, since $\lambda_1(\rho^{(1)}_k)=1$ and $p_{\bar{k}k'}\leq p_{\bar{k}}\lambda_1(\rho^{(1)}_{\bar{k}})$, which is the first ($m=1$) inequality in Eq.\ \eqref{major1}.

The present proof of Theorem 1 also holds for general mixed fermionic states $\rho$ (assumed  to commute with the fermion number $\hat{N}=\sum_k c^\dag_k c_k$ or in general the  number parity $e^{i\pi \hat{N}}$) since they can always be purified, i.e.\  considered as reduced states $\rho_{\cal S}$ of a pure fermionic state
\begin{equation}
|\Psi\rangle=\sum_{\mu,\nu}C_{\mu\nu}A^\dag_\mu B^\dag_\nu|0\rangle\,, \label{psip}
\end{equation}
of definite number parity. Here  $A^\dag_{\mu}$ and $B^\dag_\nu$  contain  creation operators in ${\cal S}$ and in an orthogonal SP space ${\cal S}_\perp$ respectively, satisfying $\langle 0|A_{\mu'}A^\dag_{\mu}|0\rangle=\delta_{\mu\mu'}$, $\langle 0|B_{\nu'}B^\dag_{\nu}|0\rangle=\delta_{\nu\nu'}$, $\langle 0|B_{\nu'}A^\dag_{\nu}|0\rangle=0$. We may assume, for instance, that $\{A^\dag_\mu B^\dag_\nu|0\rangle\}$ is a complete set of orthogonal SDs. Then $\rho_{\cal S}={\rm Tr}_{{\cal S}_{\perp}}|\Psi\rangle\langle\Psi|=\sum_{\mu,\mu'}(C C^\dag)_{\mu\mu'}|\mu\rangle\langle\mu'|$, with $|\mu\rangle=A^\dag_\mu|0\rangle$ a SD in ${\cal S}$, satisfies $\langle\Psi|O_S|\Psi\rangle={\rm Tr}\rho_{\cal S}O_{\cal S}$ $\forall$ operators   $O_S$ containing creation and annihilation operators of SP states $\in{\cal S}$. Given then an arbitrary mixed fermionic state $\rho_{\cal S}$ Eq.\ \eqref{psip} is a purification of $\rho_{\cal S}$ for any matrix $C$ satisfying $(CC^\dag)_{\mu\mu'}= \langle\mu|\rho_{\cal S}|\mu'\rangle$ (requires dim$\,{\cal S}_\perp\geq\,$dim$\,{\cal S}$).
  
Let us now prove Corollary 1, i.e.\ the extension of Theorem 1 to the more general occupancy measurement operators of Eq.\ \eqref{Mk}. In terms of the states \eqref{pek}, the ensuing postmeasurement states $|\Psi'_{k(\bar{k})}\rangle\propto M_{k(\bar{k})}|\Psi\rangle$ are 
\begin{eqnarray}
|\Psi'_k\rangle&=&\left(\alpha\sqrt{p_k}\, |\Psi_k\rangle+\beta\sqrt{p_{\bar k}}\, |\Psi_{\bar{k}}\rangle\right)/\sqrt{p'_k},\label{estk}\\
|\Psi'_{\bar k}\rangle&=&\left(\gamma\sqrt{p_k}\, |\Psi_k\rangle+\delta\sqrt{p_{\bar k}}\, |\Psi_{\bar{k}}\rangle\right)/\sqrt{p'_{\bar k}},\label{estkb}
\end{eqnarray}
where $|\alpha|^2+|\gamma|^2=1$, $|\beta|^2+|\delta|^2=1$ and 
\begin{eqnarray}
p'_k&=&p_k|\alpha|^2+p_{\bar{k}}|\beta|^2\,,\;\;\;
p'_{\bar k}=p_k|\gamma|^2+p_{\bar{k}}|\delta|^2\,,
\end{eqnarray}
with $p'_k+p'_{\bar{k}}=1$. We have to prove 
\begin{equation}
\bm\lambda(\rho^{(1)})\prec p'_k\bm\lambda({\rho'}_k^{(1)})  + p'_{\bar k}
\bm\lambda({\rho'}^{(1)}_{\bar k})\,,\label{8}
\end{equation}
where ${\rho'}^{(1)}_{k(\bar{k})}$ are now the SPDM's determined by  the states \eqref{estk}--\eqref{estkb}. The generalization of Eq.\ \eqref{Tr},  
\begin{equation}
{\rm Tr}\,\rho_{\cal S}^{(1)}={\rm Tr}\,[p'_k{\rho'}^{(1)}_{k{\cal S}}+p'_{\bar{k}}{\rho'}^{(1)}_{\bar{k}{\cal S}}]\,,\label{9}
\end{equation} 
still holds for any subspace ${\cal S}$ either orthogonal to or containing the SP state $|k\rangle$, as $[M_{k(\bar{k})},c^\dag_{k'} c_{k'}]=0$ for both $k'=k$ or $k'$ orthogonal to $k$ [see comment below Eq.\ \eqref{rhort}]. Proceeding in the same way and using previous notation, we see that $p'_k\lambda_{m+1}({\rho'}^{(1)}_{k {\cal S}'_m})$ is less than or equal to the smallest eigenvalue $\lambda_{k-}$ of the $2\times 2$ matrix 
\begin{equation}
\begin{pmatrix}
|\alpha|^2p_k&&\alpha\beta^*\langle c^\dag_{k'}c_k\rangle\\
\alpha^*\beta\langle c^\dag_{k}c_{k'}\rangle&& |\alpha|^2 p_{kk'}+|\beta|^2p_{\bar k k'}
\end{pmatrix},
\end{equation}
while $p'_{\bar{k}}\lambda_{m+1}({\rho'}^{(1)}_{\bar{k} {\cal S}'_m})$ is less than or equal to the smallest eigenvalue  $\lambda_{\bar{k}-}$ of a similar matrix with $\alpha\rightarrow \gamma$, $\beta\rightarrow\delta$. It is then  straightforward to prove, using  Eq.\ \eqref{ineqm1} for $\lambda_-$, that  $\lambda_{m+1}(\rho^{(1)}_{{\cal S}'_m})=\lambda_-\geq \lambda_{k-}+\lambda_{\bar{k}-}$, since 
\begin{eqnarray}
\lambda_--\lambda_{k-}-\lambda_{\bar{k}-}&=& {\textstyle\sqrt{\frac{(|\alpha|^2 p_{k\bar{k'}}-|\beta|^2 p_{\bar{k}k'})^2}{4}+|\alpha\beta\langle c^\dag_{k'}c_k\rangle|^2}}\nonumber\\
&&+{\textstyle\sqrt{\frac{(|\gamma|^2 p_{k\bar{k'}}-|\delta|^2 p_{\bar{k}k'})^2}{4}+|\gamma\delta\langle c^\dag_{k'}c_k\rangle|^2}}\nonumber\\&&
{\textstyle-\sqrt{\frac{(p_{k\bar{k'}}-p_{\bar{k}k'})^2}{4}+|\langle c^\dag_{k'}c_k\rangle|^2}}\nonumber\\&\geq &0\,,
\end{eqnarray}
with equality for $|r|=1$ ($|\langle c^\dag_{k'}c_k\rangle|=\sqrt{p_{\bar{k}k'}p_{k\bar{k'}}}$) or $|\alpha|=|\beta|$. Then the $m^{\rm th}$ inequality in \eqref{8}, 
\begin{equation}
\sum_{\nu=1}^m\lambda_\nu(\rho^{(1)})\leq \sum_{\nu=1}^m p'_k\lambda_\nu({\rho'}^{(1)}_k)+ p'_{\bar{k}}\lambda_\nu({\rho'}^{(1)}_{\bar k})\,,
\end{equation} 
follows due to \eqref{9} and the previously used Ky Fan inequality.  Eq.\ \eqref{8} also holds within any subspace ${\cal S}$ containing (or orthogonal to) the SP state $|k\rangle$.$\!\!$
\qed

\section{Occupation measurements on free states\label{B}} 

For a one-body entanglement nongenerating operation, Eq.\ \eqref{avmaj}  implies that the Kraus operators ${\cal K}_j$ satisfying it should convert free states onto free states, i.e.\ SDs onto SDs. For the occupation measurements of Eq.\ \eqref{Pk} (Theorem 1), Eq.\ \eqref{Mk} (Corollary 1), and Corollary 2, this property can be easily verified. Let  
\begin{equation}
|\Psi\rangle=(\prod_{\nu=1}^N c^\dag_\nu)|0\rangle\,,
\end{equation}
be a general SD for $N$ fermions, with $\{c_\nu,c^\dag_{\nu'}\}=\delta_{\nu\nu'}$. A general fermion creation operator  $c^\dag_k=\sum_{\nu=1}^n \alpha_\nu c^\dag_\nu$, with $\alpha_\nu=\{c_\nu,c^\dag_k\}$ and $\{c_k,c^\dag_k\}=\sum_{\nu}|\alpha_\nu|^2=1$, can be written as  
\begin{equation}
c^\dag_k=\sqrt{p_k} c^\dag_{k_\parallel}+\sqrt{p_{\bar k}}c^\dag_{k_\perp}\,,
\end{equation}  
where $\sqrt{p}_k c^\dag_{k_{\parallel}}=\sum_{\nu\leq N} \alpha_\nu c^\dag_\nu$ is the component in the subspace occupied in $|\Psi\rangle$, with $p_k=\sum_{\nu\leq N} |\alpha_\nu|^2=\langle \Psi|c^\dag_k c_k|\Psi\rangle$ the occupation probability of SP state $k$ and $c^\dag_{k_\parallel}|\Psi\rangle=0$, while $\sqrt{p}_{\bar{k}} c^\dag_{k_{\perp}}=\sum_{\nu> N} \alpha_\nu c^\dag_\nu$ is the orthogonal complement, with  $p_{\bar{k}}=\sum_{\nu>N}|\alpha_\nu|^2=1-p_k$ and $c_{k_\perp}|\Psi\rangle=0$. If $p_k>0$, through a unitary transformation of the  $c^\dag_\nu$ for $\nu\leq N$, they  can be chosen such that $c^\dag_{k\parallel}=c^\dag_{\nu=N}$. Hence, for the measurement operators of corollary 2, we see that 
\begin{eqnarray}
c_k|\Psi\rangle&=&\sqrt{p_k} c_{k_\parallel}|\Psi\rangle, \;\; c^\dag_k|\Psi\rangle=\sqrt{p_{\bar k}}c^\dag_{k_\perp}|\Psi\rangle\,,\label{B3}
\end{eqnarray}
are clearly orthogonal SDs. For the measurement operators of Theorem 1,  Eq.\ \eqref{B3}  implies  
\begin{eqnarray}
c^\dag_k c_k|\Psi\rangle&=&\sqrt{p_k} c^\dag_k c_{k_\parallel}|\Psi\rangle\,,\label{B22}\\ c_kc^\dag_k|\Psi\rangle&=&\sqrt{p_{\bar{k}}} c_k c^\dag_{k_\perp}|\Psi\rangle= \sqrt{p_{\bar k}} c^\dag_{k'}c_{k_\parallel}|\Psi\rangle\,,\label{B23}
\end{eqnarray} 
where $c^\dag_{k'}=\sqrt{p_{\bar k}}c^\dag_{k_{\parallel}}-\sqrt{p_k}c^\dag_{k_\perp}$, which are also orthogonal SD's ($\{c_k,c^\dag_{k'}\}=0$). And in the case of the generalized measurement  based on the operators \eqref{Mk}, we  see from \eqref{B22}--\eqref{B23} that  
\begin{eqnarray}
M_k|\Psi\rangle&=& (\alpha\sqrt{p_k}c^\dag_k+\beta\sqrt{p_{\bar{k}}}c^\dag_{k'}) c_{k_{\parallel}}|\Psi\rangle\\
M_{\bar{k}}|\Psi\rangle&=& (\gamma\sqrt{p_k}c^\dag_k+\delta\sqrt{p_{\bar{k}}}c^\dag_{k'}) c_{k_{\parallel}}|\Psi\rangle
\end{eqnarray}
are as well SDs, not necessarily orthogonal.

\section{Comparing one-body entanglement of states with different particle number\label{C}}

Given two pure fermionic states $|\Psi\rangle$ and $|\Phi\rangle$ with the same fermion number $N$, such that their associated SPDMs have the same trace  ${\rm Tr}\,\rho^{(1)}_\Psi={\rm Tr}\,\rho^{(1)}_\Phi=N$, $|\Psi\rangle$ is  considered not less entangled than $|\Phi\rangle$ if Eq.\ \eqref{2} ($\bm{\lambda}(\rho^{(1)}_{\Psi})\prec\bm{\lambda}(\rho^{(1)}_{\Phi})$) is satisfied. Here $\bm{\lambda}(\rho^{(1)})$ denotes the spectrum of $\rho^{(1)}$ sorted in decreasing order. It can be shown that
\begin{equation}
\bm{\lambda}(\rho^{(1)}_{\Psi})\prec\bm{\lambda}(\rho^{(1)}_{\Phi})\;\Longrightarrow\;  \bm{\lambda}(D^{(1)}_{\Psi})\prec\bm{\lambda}(D^{(1)}_{\Phi})\,,\label{C2}
\end{equation}
for the extended  DM  defined in  Eq.\ \eqref{D1}, the spectrum of which is $(\bm{\lambda},1-\bm{\lambda})$. Eq.\ \eqref{C2}  follows from the  straightforward properties  (see for instance \cite{MOA.11,Bh.97})\\
i) $\bm{\lambda}\prec\bm{\lambda}'$ 
$\Longrightarrow$ $1-\bm{\lambda}\prec 1-\bm{\lambda}'$,\\
ii) $\bm{\lambda}\prec\bm{\lambda}'$ and   $\bm{\mu}\prec\bm{\mu}'$ $\Longrightarrow$ $(\bm{\lambda},\bm{\mu})\prec(\bm{\lambda}',\bm{\mu}'),$\\
where $\bm{\lambda},\bm{\lambda}', \bm{\mu},\bm{\mu}'\in\mathbb{R}^n$ are sorted in decreasing order and $(\bm{\lambda},\bm{\mu})\in\mathbb{R}^{2n}$ denotes the sorted vector resulting from  the union of $\bm{\lambda}$ and $\bm{\mu}$. The converse relation in \eqref{C2} does not hold. These properties also entail that the majorization relation \eqref{avmaj} implies 
\begin{equation}
\bm{\lambda}(D^{(1)})\prec \sum_j p_j\bm{\lambda}(D^{(1)}_j),\label{C3}
\end{equation}
for the corresponding extended densities. 

The advantage of Eqs.\ \eqref{C2}--\eqref{C3} is that within a fixed SP space the extended vectors can always be compared through  majorization, regardless of the particle number $N$, since ${\rm Tr}\,D^{(1)}=n={\rm dim}\,{\cal H}$ is fixed by the SP space dimension. For two states $|\Psi\rangle$ and $|\Phi\rangle$ with definite but not necessarily coincident fermion number, we then say that $|\Psi\rangle$ is not less entangled than $|\Phi\rangle$ if Eq.\ \eqref{C2} holds. In particular, it is clear that —up to a permutation— the same eigenvalue vector $\bm{\lambda}(D^{(1)})$ is assigned to all SD states in the Fock space of the system, irrespective of $N$, so that they are all least entangled states. 

The extension of Definition 1 to general ONG operations, not necessarily conserving the particle number, is now straightforward: a quantum operation is ONG if it admits a set of Kraus operators $\{{\cal K}_j\}$ satisfying Eq.\ \eqref{C3}   $\forall$ $\rho$, with $D^{(1)}$  and $D^{(1)}_j$ the extended SPDMs determined by $\rho$ and $\rho_j$ respectively. This extension allows us to consider  operations  such as that of Corollary 2, with Kraus operators  $c_k$ and $c^\dag_k$. The extended SPDMs $D^{(1)}$  of the postmeasurement states $\frac{1}{\sqrt{p_k}} c_k|\Psi\rangle$ and $\frac{1}{\sqrt{p_{\bar{k}}}} c^\dag_k|\Psi\rangle$ have clearly the same spectrum as those obtained from $\frac{1}{\sqrt{p_k}} c^\dag_kc_k|\Psi\rangle$ and $\frac{1}{\sqrt{p_{\bar{k}}}} c_kc^\dag_k|\Psi\rangle$ —up to a permutation—, with the same probabilities, such that theorem 1 directly implies corollary 2. In fact, the  number conserving occupation measurement is just  a composition of the former with  itself, as $(c_k,c^\dag_k)\circ(c_k,c^\dag_k)=(c_kc^\dag_k,c^\dag_k c_k)$. 
 
We can therefore embed the fermion number preserving resource theory  within a more general theory in which the set of free states is the convex hull of SD states —of all possible particle number— and where the free operations are one-body unitaries and  operations based on the $\{c_k,c^\dag_k\}$ measurement mapping SDs onto SDs. Any SD and hence any free state can be prepared from an arbitrary state $\rho$ by means of free operations only, i.e., by applying one-body unitaries and successive $\{c_k,c^\dag_k\}$ measurements with postselection.  Since the starting state is arbitrary, any free state in this theory can be converted into any other free state by free operations. Allowing the particle number to vary implies that appending free states is also a free operation, since this will not alter the spectrum of the associated $D^{(1)}$ in the full SP space. 

We have here considered pure states $|\Psi\rangle$ with definite fermion number and operators ${\cal K}_j$ which produce states ${\cal K}_j|\Psi\rangle$ with definite fermion number when applied to such states, suitable for systems where a particle number superselection rule applies. The extension to the case where general fermionic Gaussian states (with no fixed particle number but definite number parity) and active FLO operations are also considered free is straightforward. It requires the consideration of the full extended  $2n\times 2n$ quasiparticle density matrix containing in addition the pair creation and annihilation contractions $\langle c^\dag_k c^\dag_{k'}\rangle$ and $\langle c_{k'} c_k\rangle$, the eigenvalues of which remain invariant under general Bogoliubov transformations. Its mixedness determines a generalized one-body entanglement \cite{GR.15} which vanishes if and only if the state is a SD or a quasiparticle vacuum, i.e., a general pure fermionic Gaussian state.

\end{document}